\definecolor{darkred}  {rgb}{0.5,0,0}
\definecolor{darkblue} {rgb}{0,0,0.5}
\definecolor{darkgreen}{rgb}{0,0.5,0}
\crefname{lemma}{Lemma}{Lemmas}
\crefname{proposition}{Proposition}{Propositions}
\crefname{definition}{Definition}{Definitions}
\crefname{theorem}{Theorem}{Theorems}
\crefname{conjecture}{Conjecture}{Conjectures}
\crefname{corollary}{Corollary}{Corollaries}
\crefname{section}{Section}{Sections}
\crefname{appendix}{Appendix}{Appendices}
\crefname{figure}{Fig.}{Figs.}
\crefname{algorithm}{Alg.}{Algs.}
\crefname{equation}{Eq.}{Eqs.}
\crefname{table}{Table}{Tables}
\crefname{claim}{Claim}{Claims}
\newtheorem{theorem}{Theorem}
\newtheorem{lemma}[theorem]{Lemma}
\newtheorem{definition}[theorem]{Definition}
\newtheorem{corollary}[theorem]{Corollary}
\newtheorem*{conjecture*}{Conjecture}
\theoremstyle{definition}
\newcommand{\ket}[1]{|#1\rangle}
\newcommand{\bra}[1]{\langle#1|}
\newcommand{\proj}[1]{|#1\rangle\!\langle#1|}
\DeclareMathOperator*{\argmin}{arg\,min}
\DeclareMathAlphabet{\matheu}{U}{eus}{m}{n}
\newcommand{\sop}[1]{{\mathcal #1}}
\newcommand{\braket}[2]{\langle{#1}|{#2}\rangle}
\newcommand{\ketbra}[2]{|{#1}\rangle\!\langle{#2}|}
\newcommand{\iner}[1]{\langle{#1}|{#1}\rangle}
\newcommand{\Lal}[1]{{\Lambda^{\!{#1}}}}
\newcommand{\U}[3]{{U({#1},{#2},{#3})}}
\newcommand{\w}[2]{{w({#1},{#2})}}
\newcommand{\Aa}[1]{{\tilde{A}^{#1}}}
\newcommand{\PtHx}{\Pi_{x}}
\newcommand{\PHx}{\Pi_{{H}(x)}}
\newcommand{\Lalp}{{\Uplambda^{\!\alpha,\hat{\varepsilon}}}}
\newcommand{\PHxp}{\Uppi_{x}}
\newcommand{\Up}[4]{{\sop U({#1},{#2},{#3},{#4})}}
\begin{document}

\title{Leveraging Unknown Structure in Quantum Query Algorithms}
\author[1]{Noel T. Anderson}
\author[1]{Jay-U Chung}
\author[1]{Shelby Kimmel\footnote{skimmel@middlebury.edu}}
\affil[1]{Middlebury College, Department of Computer Science\\ Middlebury, VT, USA}

\date{}

\maketitle

\abstract{
Quantum span program algorithms for function evaluation commonly have reduced query complexity when promised that the input has a certain structure. We design a modified span program algorithm to show these speed-ups persist even without having a promise ahead of time, and we extend this approach to the more general problem of state conversion. For example, there is a span program algorithm that decides whether two vertices are connected in an $n$-vertex graph with $O(n^{3/2})$ queries in general, but with $O(\sqrt{k}n)$ queries if promised that, if there is a path, there is one with at most $k$ edges. Our algorithm uses $\tilde{O}(\sqrt{k}n)$ queries to solve this problem if there is a path with at most $k$ edges, without knowing $k$ ahead of time.}
\section{Introduction}\label{sec:intro}
Quantum algorithms often yield speed-ups when given a promise on
the input. For example, if we know that there are $M$ marked items out of $N$, or no marked items at all, then Grover's search can be run in time $O(\sqrt{N/M})$,
rather than $O(\sqrt{N})$, the worst case runtime with
a single marked item \cite{aharonovQuantumComputation1999}.

In the case of Grover's algorithm, a series of results
\cite{boyerTightBoundsQuantum1998,brassardQuantumAmplitudeAmplification2000,brassardQuantumCounting1998,yoderFixedPointQuantumSearch2014} removed the promise; if there are $M$ marked items, there is a quantum search algorithm that runs in $O(\sqrt{N/M})$ time, even without knowing the number of marked items ahead of time.
Most relevant for our
work, several of these algorithms involve iteratively running Grover's
search with exponentially growing runtimes \cite{boyerTightBoundsQuantum1998,brassardQuantumAmplitudeAmplification2000} until a marked item is found.

Grover's algorithm was one of the first quantum query algorithms discovered \cite{groverQuantumMechanicsHelps1997a}.
Since that time, a much more general framework for quantum query algorithms
has been laid out. Span programs, and more generally, the dual
of the general adversary bound and the filtered $\gamma_2$ norm, are frameworks for creating optimal query
algorithms for function decision problems
\cite{reichardtSpanProgramsQuantum2009,reichardtReflectionsQuantumQuery2011}
and nearly optimal algorithms for state conversion problems, in which the goal is to generate a quantum state based on an oracle and an input state \cite{leeQuantumQueryComplexity2011}. Moreover, these frameworks are also useful in practice \cite{beigiQuantumSpeedupBased2019,belovsSpanProgramsQuantum2012,belovsSpanProgramsFunctions2012,belovsTightQuantumLower2020a,cadeTimeSpaceEfficient2018,delorenzoApplicationsQuantumAlgorithm2019a}.

For many span program algorithms, analogous to multiple marked items in Grover's search, there are features
which, if promised they exist, allow for improvement over the worst case query complexity. For example, the span program version of Grover's algorithm also runs faster with the promise of multiple marked items. Another well-studied example is the span program algorithm for 
$st$-connectivity. This algorithm uses
$O\left(n^{3/2}\right)$ queries on an $n$-vertex graph. However, if promised ahead of time that if there is
a path, it has length at most $k$, then the problem can be solved with
$O(\sqrt{k}n)$ queries \cite{belovsSpanProgramsQuantum2012}.

Our contribution is to speed up the generic span program and state conversion algorithms in the case that some speed-up inducing structure (such as multiple marked items, or a short path) is present, even without the promise on the structure ahead of time. One might expect that it
is trivial to remove the promise. Surely if an algorithm produces a correct result in shorter time when promised a property is present, then it should also produce a correct result in a shorter time 
without the promise if the property still holds? While this is true and these algorithms always output a result, even if run for a short time, the problem is that they don't produce a flag of completion and their output can not be easily verified. Without a flag of completion or a promise of structure, it is impossible to be confident that the result is correct. Span program and state conversion
algorithms differ from Grover's algorithm in their lack of a flag; in Grover's algorithm one can test
with a single query whether the output is a marked item, thus flagging that the output of the algorithm is correct, and that
the algorithm has run for a sufficiently long time. We note that when span program algorithms previously have claimed a speed-up with
structure, they always included a promise, or they give the
disclaimer that running the algorithm will give an incorrect result with
high probability if the promise is not known ahead of time to be satisfied, e.g. \cite[App. C.3] {cadeTimeSpaceEfficient2018}.

To overcome this challenge, we use an approach that is
similar to the iterative modifications to Grover's algorithm; we run subroutines for exponentially increasing times, and we have novel ways to flag when the computation should halt. Our algorithms match the performance of algorithms with an optimal promise, up to logarithmic factors.


Based on prior work showing the presence of a speed-up for span program algorithms with a promise, our work immediately provides analogous speed-ups without the promise:
\begin{itemize}
\item For undirected $st$-connectivity described above, our algorithm determines whether there is a path from $s$ to
$t$ in an $n$-vertex graph with $\tilde{O}(\sqrt{k}n)$ queries if there is a path of length $k$,
 and if there is no path,
the algorithm uses $\tilde{O}(\sqrt{nc})$ queries, where $c$ is the size of
the smallest cut between $s$ and $t$. In either case, $k$ and $c$ need not be known ahead of time.
\item For an $n$-vertex undirected graph, we can determine if it is connected in $\tilde{O}(n\sqrt{R})$ queries, where $R$ is the average
effective resistance and we can determine the graph is not connected in $\tilde{O}(\sqrt{n^{3}/\kappa})$ queries, where $\kappa$ is the number of components. These query complexities hold without knowing $R$ or $\kappa$ ahead of time. See Ref.
\cite{jarretQuantumAlgorithmsConnectivity2018} for the promise version of this problem.
\item For cycle detection on an $n$-vertex undirected graph,
whose promise version was analyzed in Ref.
\cite{delorenzoApplicationsQuantumAlgorithm2019a}, if the circuit rank is $C$, then our algorithm will decide
there is a cycle in $\tilde{O}(\sqrt{n^{3}/C})$ queries, while if there is
no cycle and at most $\mu$ edges, the algorithm will decide there is no cycle
in $\tilde{O}(\mu\sqrt{n})$ queries. This result holds
without knowing $C$ or $\mu$ ahead of time.
\item We can decide the winner of the NAND-tree \cite{farhi2007quantum} and similar games, whose promise version was analyzed in \cite{zhan2012super,kimmel2011quantum}, in $\tilde{O}(b^k)$ queries for a constant $b$ that depends on the details of the game, if the game satisfies the $k$-fault condition, a bound on the number of consequential decisions a player must make through the course of the game. The parameter $k$ need not be known ahead of time.
\end{itemize}


\subsection{Directions for Future Work}

In the original state conversion
algorithm, to achieve an error of $\varepsilon$ in the output state (by some
metric), the query complexity scales as
$O\left(\varepsilon^{-2}\right)$. However, in our result, the query complexity scales as $O\left(\varepsilon^{-5}\right)$. In cases where accuracy must
scale with the input size, this error term could overwhelm any
advantage from our approach, and so it would be beneficial to improve this error scaling.

By iterating with exponentially increasing runtimes, we incur a logarithmic factor to manage errors. However, the 
fixed-point method for Grover's algorithm \cite{yoderFixedPointQuantumSearch2014}
avoids this overhead. Perhaps fixed-point techniques could improve the performance of our algorithms. 

Our result is similar to that of Ito and Jeffery
\cite{itoApproximateSpanPrograms2019}, which estimates structure size (e.g. the number of marked items in the case of search) with queries proportional to the square root of the structure size. While there are similarities between our approaches, neither result seems to directly imply the other.
Better understanding the relationship between these strategies could lead to improved
algorithms for determining properties of input structure for both
span programs and state generation problems.

Finally, these algorithms would likely provide opportunities for proving significant quantum to classical speed-ups in average-case query complexity. Exponential and super-polynomial speed-ups for average case query complexity have been shown when the quantum algorithm can alert the user that the algorithm has terminated early \cite{ambainis2001average,montanaro2010quantum}. Our work significantly expands the set of algorithms for which such an early termination flag exists. 

\section{Preliminaries}\label{sec:Prelims}

\textbf{Basic Notation:} Let $[n]$ represent $\{1,2,\dots,n\}$, and $\log$ denotes base 2 logarithm. We denote a linear operator from the space $V$ to the space $U$ as $\sop L(V,U)$. We use $I$ for the identity operator. (It will be clear from context which space $I$ acts on.) Given a projection $\Pi$, its complement
is $\overline{\Pi}=(I-\Pi).$ For a matrix $M$, by $M_{xy}$ or $(M)_{xy}$, we denote the element in the $x$th row and $y$th column of $M$. By $\tilde{O}$, we denote big-O notation that ignores log factors. The $l_2$-norm of a vector $\ket{v}$ is denoted by $\|\ket{v}\|$. For any unitary $U$, let $P_\Theta(U)$ be the projection onto the eigenvectors
of $U$ with phase at most $\Theta$. That is,
$P_\Theta(U)$ is the projection onto $\textrm{span}\{\ket{u}:U\ket{u}=e^{i\theta}\ket{u}\textrm{ with
}|\theta|\leq\Theta\}$.

\subsection{Quantum Algorithmic Building Blocks}

In this paper, we consider quantum query algorithms, in which one has access to a unitary $O_x$, called the oracle, which encodes a string $x\in X$ for $X\subseteq [q]^n$ and some parameter $q\geq 2$. The oracle $O_x$ acts on the Hilbert space $\mathbb{C}^n\otimes \mathbb{C}^q$ as
$O_x\ket{i}\ket{b}=\ket{i}\ket{x_i+b\textrm{ mod }q}$, where $x_i\in[q]$ is the $i$th
element of $x$.

Given $O_x$, and with no knowledge of $x$ ahead of time, except that $x\in X$,
we would like to perform a computation that depends on $x$. The query complexity is the minimum number of uses of the oracle required such that for all $x\in X$, the computation is successful with some desired probability of success.

Several of our key algorithmic subroutines are based around a parallelized
version of phase estimation, as described in Ref.
\cite{magniezSearchQuantumWalk2011}. The idea of this parallel phase
estimation algorithm is as follows: for a unitary $U$, a precision $\Theta>0$,
and an accuracy $\epsilon>0$, the circuit $D(U)$ implements
$O(\log\frac{1}{\epsilon})$ copies of the phase estimation circuit on $U$,
each to precision $O(\Theta)$, that all measure the phase of a single state on
the same input register. If $U$ acts on a Hilbert Space $\sop H$, then $D(U)$
acts on the space $\sop H_A\otimes((\mathbb{C}^{2})^{\otimes b})_B$ for
$b=O\left(\log\frac{1}{\Theta}\log\frac{1}{\epsilon}\right)$, where we've used
$A$ to label the register that stores the input state, and $B$ to label the
registers that store the results of the parallel phase estimations. $D(U)$ is
uniformly constructed in $\epsilon$ and $\Theta$, and its structure is
independent of $U$.

The circuit $D(U)$ can be used for Phase Checking: checking if a state
$\ket{\psi}$ is close to an eigenvector of $U$ that has eigenvalue close to
$1$. This is related to the probability of outcome $\ket{0}_B$ after
$D(U)$ is applied to the state $\ket{\psi}_A\ket{0}_B$. To characterize this
probability, we define $\Pi_0(U)$ to be the orthogonal projection onto the
subspace of $\sop H_A\otimes((\mathbb{C}^{2})^{\otimes b})_B$ that $D(U)$ maps
to states with $\ket{0}_B$ in the $B$ register.  That is,
$\Pi_0(U)=D(U)^\dagger \left(I_A\otimes\proj{0}_B\right) D(U).$ (Since
$\Pi_0(U)$ depends on the choice of $\Theta$ and $\epsilon$ used in $D(U)$,
those values must be specified, if not clear from context, when discussing
$\Pi_0(U)$.)  We now summarize relevant prior results for Phase Checking
(see
\cite{kitaevQuantumMeasurementsAbelian1995,cleveQuantumAlgorithmsRevisited1998,magniezSearchQuantumWalk2011}
for detailed analysis) in \cref{lem:phase_det}:

\begin{lemma}[Phase Checking]
Let $U$ be a unitary on a Hilbert Space $\sop H$, and let $\Theta,\epsilon>0$. We call $\Theta$ the precision and $\epsilon$ the accuracy. Then there is a circuit $D(U)$ that acts on the space $\sop H_A\otimes
((\mathbb{C}^{2})^{\otimes b})_B$ for $b=O\left(\log\frac{1}{\Theta}\log\frac{1}{\epsilon}\right)$, and that
uses $O\left(\frac{1}{\Theta}\log\frac{1}{\epsilon}\right)$ calls to
control-$U$.  Then for any state $\ket{\psi}\in \sop H$
\begin{itemize}
\item $ \|P_0(U)\ket{\psi}\|^2\leq \|\Pi_0(U)\left(\ket{\psi}_A\ket{0}_B\right)\|^2\leq \|P_\Theta(U)\ket{\psi}\|^2+\epsilon,$ and
\item $\|\Pi_0(U)\left(\overline{P}_\Theta(U)\ket{\psi}\right)_A\ket{0}_B\|^2\leq \epsilon.$
\end{itemize}
\label{lem:phase_det}
\end{lemma}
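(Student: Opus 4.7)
The plan is to build $D(U)$ as a boosted Kitaev-style phase estimation circuit and reduce each claimed bound to a per-eigenvector statement about standard phase estimation. First I would assemble $D(U)$ from $m = O(\log\frac{1}{\epsilon})$ independent copies of a $t$-bit phase estimation circuit with $t = O(\log\frac{1}{\Theta})$, each copy sharing the input register $A$ but owning its own ancilla register $B_i$; the copies are applied sequentially. Each copy costs $O(2^t)=O(1/\Theta)$ controlled-$U$ calls and $t$ ancilla qubits, producing the required totals $b = mt = O(\log\frac{1}{\Theta}\log\frac{1}{\epsilon})$ ancillas and $O(\frac{1}{\Theta}\log\frac{1}{\epsilon})$ queries. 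Because $D(U)$ is unitary, $\Pi_0(U) = D(U)^\dagger(I_A\otimes\proj{0}_B)D(U)$ is automatically an orthogonal projection.

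Next I would compute $D(U)$'s action on the eigenbasis. Writing $U = \sum_j e^{i\theta_j}\proj{u_j}$ and using the fact that controlled-$U$ leaves $\ket{u_j}$ invariant while only imprinting phase on the ancilla, each phase estimation copy maps $\ket{u_j}_A\ket{0}_{B_i}$ to $\ket{u_j}_A\ket{\zeta_j}_{B_i}$ for a fixed ancilla state $\ket{\zeta_j}$. Since the ancillas are disjoint and all $m$ copies see the same eigenvector, the outputs tensor, giving $D(U)(\ket{u_j}_A\ket{0}_B) = \ket{u_j}_A\otimes \ket{\zeta_j}^{\otimes m}_B$. Set $q_j := |\braket{0}{\zeta_j}|^{2m}$; this is the probability of reading $\ket{0}_B$ when the input eigenvector is $\ket{u_j}$. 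Two per-eigenvector facts drive the proof: $q_j = 1$ when $\theta_j = 0$ (deterministic), and $q_j \le c^m \le \epsilon$ when $|\theta_j|>\Theta$, where $c<1$ is the standard single-copy PE bound for phases outside the precision window (made concrete via the $|\sin(2^{t-1}\theta)/(2^t\sin(\theta/2))|^2$ formula, choosing $t = \log(1/\Theta)+O(1)$ so that $c$ is a fixed constant and $m = O(\log(1/\epsilon))$ absorbs the rest).

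To finish, expand $\ket{\psi} = \sum_j\alpha_j\ket{u_j}$ and use orthogonality of the $\ket{u_j}$ together with the tensor-product form of $D(U)\ket{u_j}_A\ket{0}_B$ to obtain
\[
\|\Pi_0(U)(\ket{\psi}_A\ket{0}_B)\|^2 \;=\; \sum_j |\alpha_j|^2\,q_j.
\]
Keeping only the $\theta_j=0$ terms gives the lower bound $\|P_0(U)\ket{\psi}\|^2$; splitting at $|\theta_j|=\Theta$ and using $q_j\le 1$ on the low side and $q_j\le\epsilon$ on the high side gives the upper bound $\|P_\Theta(U)\ket{\psi}\|^2+\epsilon$, which settles the first bullet. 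For the second bullet, the same decomposition applied to $\overline{P}_\Theta(U)\ket{\psi}$, whose eigenbasis expansion is supported entirely on $|\theta_j|>\Theta$, bounds the $\Pi_0$-weight by $\epsilon\sum_j|\alpha_j|^2\le\epsilon$.

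The main obstacle is the quantitative single-copy PE bound: one has to verify that with $t = O(\log(1/\Theta))$ bits the overlap $|\braket{0}{\zeta_j}|^2$ is uniformly bounded by some constant $c<1$ for \emph{every} $|\theta_j|>\Theta$, not just phases far from $0$. This is routine from the Fejér-kernel estimate for the PE output distribution, but it is where all the hidden constants in $t$ and in the $O(\frac{1}{\Theta}\log\frac{1}{\epsilon})$ query count actually live; the remainder of the argument is just orthogonality bookkeeping.
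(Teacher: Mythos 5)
Your proof is correct and follows essentially the same route the paper relies on: the paper states this lemma as a summary of prior work (Kitaev/Cleve et al./Magniez et al.), namely parallel repetition of $O(\log\frac{1}{\epsilon})$ standard phase-estimation circuits of precision $O(\Theta)$ on a shared input register, which is exactly the construction you formalize. Your per-eigenvector bookkeeping ($q_j=1$ for $\theta_j=0$, $q_j\leq\epsilon$ for $|\theta_j|>\Theta$ via the Fej\'er-kernel bound and $m$-fold repetition) is the standard analysis behind the cited results, so there is no substantive gap.
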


It is possible to modify the Phase Checking circuit by implementing $D(U)$ as described above, applying a
$-1$ phase to the $A$ register if the $B$ register is \textit{not} in the
state $\ket{0}_B$, and then implementing $D(U)^\dagger$. We call this circuit Phase Reflection\footnote{In Ref.
\cite{leeQuantumQueryComplexity2011}, this procedure is referred to as ``Phase
Detection,'' but since no measurement is made, and rather only a reflection is
applied, we thought renaming this protocol as ``Phase Reflection''
would be more descriptive and easier to distinguish from ``Phase Checking.'' We apologize for any confusion this may cause when
comparing to prior work!}
and denote it as $R(U).$ Note that $R(U)=\Pi_0(U)-\overline{\Pi}_0(U)$, where $R(U)$ and $\Pi_0(U)$ have the same implicit precision $\Theta$ and accuracy $\epsilon$. The following lemma summarizes prior results on relevant properties of Phase Reflection.
\begin{lemma} [Phase Reflection \cite{magniezSearchQuantumWalk2011,leeQuantumQueryComplexity2011}]
Let $U$ be a unitary on a Hilbert Space $\sop H$, and let
$\Theta,\epsilon>0$. We call $\Theta$ the precision and $\epsilon$ the accuracy. Then there is a circuit $R(U)$ that acts on the space
$\sop H_A\otimes ((\mathbb{C}^{2})^{\otimes b})_B$ for
$b=O(\log(1/\Theta)\log(1/\epsilon))$, and that uses
$O\left(\frac{1}{\Theta}\log\frac{1}{\epsilon}\right)$ calls to control-$U$
and control-$U^\dagger$ such that
\begin{itemize}
\item $R(U)(P_0\ket{\psi})\ket{0}_B=(P_0\ket{\psi})_A\ket{0}_B$, and
\item $\|(R(U)+I)(\overline{P}_\Theta\ket{\psi})_A\ket{0}_B\|<\epsilon $.
\end{itemize}
Furthermore, $R(U)$ is uniformly constructed in $\epsilon$ and $\Theta$, with structure independent of $U$. 
\label{lem:phase_refl}
\end{lemma}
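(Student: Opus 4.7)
The plan is to derive both bullets from the operator identity $R(U) = 2\Pi_0(U) - I$ together with \cref{lem:phase_det} and the standard observation that textbook phase estimation preserves $+1$ eigenvectors exactly. Applying a $-1$ phase to the $A$ register exactly when the $B$ register is not $\ket{0}_B$ is the unitary $2(I_A\otimes\proj{0}_B) - I$, so sandwiching between $D(U)$ and $D(U)^\dagger$ gives
\[
R(U) \;=\; D(U)^\dagger\bigl[\,2(I_A\otimes\proj{0}_B) - I\,\bigr]D(U) \;=\; 2\Pi_0(U) - I \;=\; \Pi_0(U) - \overline{\Pi}_0(U),
\]
confirming the identity noted in the excerpt. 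The Hilbert space, uniform constructibility, and $O\!\left(\frac{1}{\Theta}\log\frac{1}{\epsilon}\right)$ call count are then inherited from $D(U)$, since the middle reflection requires no queries.

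For the first bullet, I would verify the stronger statement that $D(U)(P_0\ket{\psi})_A\ket{0}_B = (P_0\ket{\psi})_A\ket{0}_B$ exactly. Each of the $O(\log(1/\epsilon))$ parallel phase estimation subroutines comprising $D(U)$, when applied to a $+1$ eigenvector of $U$ with ancilla $\ket{0}$, returns that eigenvector unchanged and the ancilla exactly in $\ket{0}$: the Hadamards produce a uniform superposition, the controlled-$U^j$ ladder acts trivially because the kicked-back phase equals $1$, and the inverse QFT sends the uniform superposition back to $\ket{0}$. Hence $\Pi_0(U)(P_0\ket{\psi})_A\ket{0}_B = (P_0\ket{\psi})_A\ket{0}_B$, and applying $R(U) = 2\Pi_0(U) - I$ then yields the first bullet.

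For the second bullet, the identity $R(U) + I = 2\Pi_0(U)$ gives
\[
\bigl\|(R(U)+I)(\overline{P}_\Theta\ket{\psi})_A\ket{0}_B\bigr\|^2 \;=\; 4\,\bigl\|\Pi_0(U)(\overline{P}_\Theta\ket{\psi})_A\ket{0}_B\bigr\|^2.
\]
Running $D(U)$ internally with accuracy parameter $\epsilon' = \epsilon^2/4$ and invoking the second inequality of \cref{lem:phase_det} bounds the right-hand side above by $4\epsilon' = \epsilon^2$, giving the required $\epsilon$-bound on the norm. The rescaling changes the logarithmic factor in the call count by only a constant, so the asymptotic resource bounds are unaffected.

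The only delicate point is this factor-of-two discrepancy between $R(U)$ and $\Pi_0(U)$, which forces a quadratic tightening of the internal Phase Checking accuracy in order to match the stated $\epsilon$ (rather than $\sqrt{\epsilon}$) bound. Beyond that the argument is bookkeeping on the parallel phase estimation circuit, and I do not anticipate a substantive obstacle.
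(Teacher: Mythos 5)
Your construction is correct, and it is essentially the standard argument from the works this lemma is cited from (\cite{magniezSearchQuantumWalk2011,leeQuantumQueryComplexity2011}); the paper itself offers no proof, so there is nothing internal to compare against beyond the definitions of $D(U)$ and $\Pi_0(U)$. The identity $R(U)=D(U)^\dagger[2(I_A\otimes\proj{0}_B)-I]D(U)=2\Pi_0(U)-I$, the exactness of phase estimation on eigenvalue-$1$ eigenvectors (giving the first bullet), and the inheritance of the space, query count, and uniformity from $D(U)$ all match the standard treatment. The one place you genuinely deviate is the second bullet: you derive it from the probability bound $\|\Pi_0(U)(\overline{P}_\Theta\ket{\psi})_A\ket{0}_B\|^2\leq\epsilon$ of \cref{lem:phase_det} and compensate for the square-root loss by running $D(U)$ at internal accuracy $\epsilon^2/4$. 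That is a valid proof of the lemma as stated, and the resource bounds survive the rescaling. The cited proofs instead bound the \emph{amplitude} of the all-zeros $B$ outcome directly: for an eigenvector with phase exceeding $\Theta$, each of the $k=O(\log(1/\epsilon))$ parallel copies contributes a factor of magnitude at most a constant $c<1$ to $\braket{0}{g_\theta}$, so $\|\Pi_0(U)(\overline{P}_\Theta\ket{\psi})_A\ket{0}_B\|\leq c^{k}\leq\epsilon/2$ already at the nominal accuracy, with no rescaling. This distinction is not merely cosmetic for this paper: the text above the lemma, and the proof of \cref{lem:finalStateAnalysis} via \cref{claim:ampEstbound}, use the identity $R(U)=\Pi_0(U)-\overline{\Pi}_0(U)$ with the \emph{same} implicit precision and accuracy as the $\Pi_0(U)$ appearing elsewhere; under your rescaled construction that identity holds only for a $\Pi_0$ built at accuracy $\epsilon^2/4$, so if you want your $R(U)$ to plug into those later arguments verbatim you should either adopt the direct amplitude bound or note that the stronger internal accuracy only strengthens the \cref{lem:phase_det} guarantees being invoked.
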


Finally, we will use Amplitude Estimation \cite{brassardQuantumAmplitudeAmplification2000} and the effective spectral gap lemma \cite{leeQuantumQueryComplexity2011}:

\begin{lemma} [Amplitude Estimation \cite{brassardQuantumAmplitudeAmplification2000}] Let $\delta>0$, and let $\mathcal{A}$ be a quantum circuit such that $\mathcal{A}\ket{\psi}=\alpha_0\ket{0}\ket{\psi_0}+\alpha_1\ket{1}\ket{\psi_1}$. Then there is an algorithm that estimates $|\alpha_0|^2$ to additive error $\delta$ with success probability at least $1-p$ using $O\left(\frac{1}{\delta p}\right)$ calls to $\mathcal{A}$ and $\mathcal{A}^\dagger$.
\label{lem:ampEst}
\end{lemma}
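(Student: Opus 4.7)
The plan is to follow the original Brassard--Høyer--Mosca--Tapp construction, which reduces amplitude estimation to quantum phase estimation of a Grover-like rotation, and then to boost the success probability to $1-p$ by repetition.

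First I would construct the Grover operator $Q = -\mathcal{A}\, R_\psi\, \mathcal{A}^\dagger\, R_0$, where $R_\psi = I - 2\ketbra{\psi}{\psi}$ and $R_0$ is the reflection about the subspace with $\ket{0}$ in the first register of $\mathcal{A}\ket{\psi}$. A direct calculation shows that $Q$ preserves the two-dimensional subspace $\mathrm{span}\{\ket{0}\ket{\psi_0}, \ket{1}\ket{\psi_1}\}$ and acts there as rotation by angle $2\theta$, where $\sin^2\theta = |\alpha_0|^2$; equivalently, its two eigenvalues on this subspace are $e^{\pm 2i\theta}$, and $\mathcal{A}\ket{\psi}$ is a balanced superposition of the corresponding eigenvectors. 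Each application of $Q$ costs one call each to $\mathcal{A}$ and $\mathcal{A}^\dagger$.

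Next I would run standard quantum phase estimation on $Q$ with input $\mathcal{A}\ket{\psi}$, using $M = O(1/\delta)$ controlled applications of $Q$, so that the returned estimate $\tilde\theta$ satisfies $|\tilde\theta - \theta| \leq O(\delta)$ with probability at least $8/\pi^2$. Reporting $\sin^2\tilde\theta$ as the estimate of $|\alpha_0|^2$ and using the Lipschitz bound $|\sin^2\tilde\theta - \sin^2\theta| \leq 2|\tilde\theta - \theta|$ yields additive error at most $\delta$ with constant success probability. To reach success probability $1-p$, I would simply repeat the procedure $O(1/p)$ times; by a Markov/union-bound-style argument on the number of failed runs, at least one output is within $\delta$ of $|\alpha_0|^2$ with probability at least $1-p$, giving the claimed $O(1/(\delta p))$ total calls to $\mathcal{A}$ and $\mathcal{A}^\dagger$.

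The only mildly subtle step is verifying the two-dimensional invariance of $Q$ and then propagating the phase-estimation error through the non-linear map $\theta \mapsto \sin^2\theta$; both are routine but easy to misquote by constant factors. I note that replacing the simple repetition with a median-of-$O(\log(1/p))$ amplification would tighten the query bound to $O(\log(1/p)/\delta)$, but the weaker $O(1/(\delta p))$ form stated here is fully sufficient for how the lemma is used later.
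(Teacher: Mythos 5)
The paper gives no proof of this lemma at all: it is quoted as a black box from Brassard--H{\o}yer--Mosca--Tapp, so there is no internal argument to compare against. Your reconstruction of the core machinery is the standard and correct route: the Grover operator $Q$ built from $\mathcal{A}$, $\mathcal{A}^\dagger$ and the two reflections, its two-dimensional invariant subspace on which the eigenvalues are $e^{\pm 2i\theta}$ with $\sin^2\theta=|\alpha_0|^2$, phase estimation with $O(1/\delta)$ controlled applications of $Q$, and Lipschitz propagation of the phase error through $\theta\mapsto\sin^2\theta$. That yields a $\delta$-accurate estimate with constant success probability, at cost $O(1/\delta)$.

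The gap is in your amplification step. Repeating the constant-success estimator $O(1/p)$ times and arguing that ``at least one output is within $\delta$'' does not define an algorithm: the procedure must report a single number, and the individual estimates cannot be verified, so knowing that some unidentified run succeeded is not the same as outputting a $\delta$-accurate estimate with probability $1-p$. (A sign that the criterion is off: for ``at least one good run'' $O(\log(1/p))$ repetitions already suffice, so the $O(1/p)$ count is doing no work.) Two standard fixes close the hole. Either output the median of $O(\log(1/p))$ independent runs --- since each run succeeds with probability $8/\pi^2>1/2$, a Chernoff bound makes the median $\delta$-accurate with probability $1-p$, giving the $O(\log(1/p)/\delta)$ bound you mention only as an aside --- or follow BHMT directly and run a \emph{single} phase estimation with $M=O(1/(\delta p))$ controlled applications of $Q$: the phase-estimation tail bound (the probability of deviating by more than $k$ grid points decays like $1/k$) then gives accuracy $\delta$ with failure probability $O(p)$ in one shot, which is exactly where the stated $O\left(\frac{1}{\delta p}\right)$ scaling comes from. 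Either repair makes your proof complete; as written, the main amplification claim does not go through.
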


\begin{lemma}[Effective spectral gap lemma, \cite{leeQuantumQueryComplexity2011}]
Let $\Pi$ and $\Lambda$ be projections, and let $U=(2\Pi-I)(2\Lambda-I)$ be the
unitary that is the product of their associated reflections. If
$\Lambda\ket{w}=0$, then $\|P_\Theta(U) \Pi\ket{w}\|\leq \frac{\Theta}{2}\|\ket{w}\|.$
\label{spec_gap_lemm}
\end{lemma}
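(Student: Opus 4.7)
The plan is to invoke Jordan's lemma for pairs of projections, which splits the ambient Hilbert space into a direct sum of subspaces of dimension one and two that are simultaneously invariant under $\Pi$ and $\Lambda$. Because both $U$ and the hypothesis $\Lambda\ket{w}=0$ respect this decomposition, I would prove the bound summand by summand and then assemble it by summing squared norms.

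First I would dispose of the one-dimensional summands, where $\Pi$ and $\Lambda$ are each either $0$ or $I$. The hypothesis $\Lambda\ket{w}=0$ together with the appearance of $\Pi$ in the claim leaves only the case $\Pi=I,\,\Lambda=0$, where $U=-I$, so the unique eigenvalue has phase $\pi$ and is outside $P_\Theta(U)$ whenever $\Theta<\pi$; for $\Theta\geq\pi$ the lemma is trivial since $\|\Pi\ket{w}\|\leq\|\ket{w}\|\leq(\Theta/2)\|\ket{w}\|$. On each two-dimensional summand I would parametrize by the angle $\phi\in(0,\pi/2)$ between the ranges of $\Pi$ and $\Lambda$, choose an orthonormal basis $\{\ket{\lambda},\ket{\lambda^\perp}\}$ adapted to $\Lambda$, and show by direct calculation that $U$ acts there as a rotation by $2\phi$, so both of its eigenvalues have phase exactly $2\phi$. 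Because $\Lambda\ket{w}=0$, the component of $\ket{w}$ in this summand lies along $\ket{\lambda^\perp}$; writing it as $\alpha\ket{\lambda^\perp}$ and using $\Pi=\proj{p}$ with $\braket{p}{\lambda^\perp}=\sin\phi$ yields $\Pi\ket{w}=\alpha\sin\phi\ket{p}$, of norm $|\alpha|\sin\phi$.

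Combining these facts, $P_\Theta(U)$ either retains an entire two-dimensional summand (when $2\phi\leq\Theta$) or kills it, so writing $\ket{w}=\sum_j\alpha_j\ket{\lambda_j^\perp}$ gives $\|P_\Theta(U)\Pi\ket{w}\|^2\leq \sin^2(\Theta/2)\sum_j|\alpha_j|^2\leq(\Theta/2)^2\|\ket{w}\|^2$ via $\sin x\leq x$, which is the desired bound after taking square roots. The main obstacle is really just the Jordan-lemma bookkeeping: carefully enumerating summand types, confirming that $\Lambda\ket{w}=0$ forces $\ket{w}$ into the kernel direction of $\Lambda$ in each summand, and verifying the exact phase $2\phi$ of the eigenvalues so that the cutoff against $\Theta$ is clean. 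Once that is done, the lemma reduces to the elementary inequality $\sin(\Theta/2)\leq\Theta/2$.
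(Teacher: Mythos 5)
Your argument is correct, and it is essentially the canonical proof: the paper itself does not prove \cref{spec_gap_lemm} but imports it from \cite{leeQuantumQueryComplexity2011}, whose proof is exactly this Jordan-lemma decomposition into one- and two-dimensional invariant subspaces, with $U$ acting as a rotation by $2\phi$ on each 2D block and the bound reducing to $\sin(\Theta/2)\leq\Theta/2$. Your handling of the one-dimensional blocks and the $\Theta\geq\pi$ corner case is sound, so nothing further is needed.
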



\subsection{Span Programs}\label{sec:spanIntro}

\begin{definition} [Span Program] A span program is a tuple $P=(H,V,\tau,A)$ on $[q]^n$ where
\begin{enumerate}
\item $H$ is a direct sum of finite-dimensional inner product spaces: $H=H_1\oplus H_2\cdots H_n\oplus H_\textrm{true}\oplus H_\textrm{false},$
and for $j\in [n]$ and $a\in[q]$, we have $H_{j,a}\subseteq H_j$, such that $\sum_{a=1}^qH_{j,a}=H_j$.
\item $V$ is a vector space
\item $\tau\in V$ is a target vector, and
\item $A\in\sop L(H,V)$.
\end{enumerate}
Given a string $x\in[q]^n$, we use $H(x)$ to denote the subspace $H_{1,x_1}\oplus\cdots \oplus H_{n,x_n}\oplus H_\textrm{true}$, and we denote by $\PHx$ the orthogonal projection onto the space $H(x)$.

\label{def:SP}
\end{definition}

We use \cref{def:SP} for span programs because it applies to both binary and non-binary inputs ($q\geq 2$). The definitions in Refs. \cite{belovsSpanProgramsQuantum2012,cadeTimeSpaceEfficient2018} only apply to binary inputs ($q=2$).

\begin{definition} [Positive Witness]
Given a span program $P=(H,V,\tau,A)$ on $[q]^m$ and $x\in[q]^m$, then $\ket{w}\in H(x)$ is a positive witness for $x$ in $P$ if $A\ket{w}=\tau$. If a positive witness exists for $x$, we define the witness size of $x$ in $P$ as 
\begin{equation}
\w{P}{x}\coloneqq\min\left\{\|\ket{w}\|^2:\ket{w}\in H(x) \textrm{ and } A\ket{w}=\tau \right\}.
\end{equation}
We say that $\ket{w}\in H(x)$ is an optimal witness for $x$ if $\|\ket{w}\|^2=\w{P}{x}$ and $A\ket{w}=\tau$.
\label{def:posWitness}
\end{definition}

\begin{definition} [Negative Witness]
Given a span program $P=(H,V,\tau,A)$ on $[q]^m$ and $x\in[q]^m$, then $\omega\in \sop L(V,\mathbb{R})$ is a negative witness for $x$ in $P$ if $\omega\tau=1$ and $\omega A \PHx=0$. If a negative witness exists for $x$, we define the witness size of $x$ in $P$ as
\begin{equation}
 \w{P}{x}\coloneqq\min\left\{\|\omega A\|^2:
\omega\in L(V,\mathbb{R}), \omega A \PHx=0, \textrm{ and } \omega\tau=1 \right\}.
\end{equation}
We say that $\omega$ is an optimal witness for $x$ if $\|\omega A\|^2=\w{P}{x}$, $\omega A \PHx=0,$ and $\omega\tau=1$.
\label{def:negWit}
\end{definition}
\noindent Each $x\in[q]^m$ has a positive or negative witness (but not both), so $\w{P}{x}$ is well defined.

We say that a span program $P$ decides the function
$f:X\subseteq[q]^n\rightarrow \{0,1\}$ if each $x\in f^{-1}(1)$ has a positive
witness in $P$, and each $x\in f^{-1}(0)$ has a negative witness in $P$. Then denote $W_+(P,f)=\max_{x\in f^{-1}(1)}\w{P}{x}$ and let $W_-(P,f)=\max_{x\in f^{-1}(0)}\w{P}{x}$.

Given a description of a span program that decides a function, one
can use it to design a quantum query algorithm that evaluates the same function. 
The query complexity of the quantum algorithm depends on $W_+(P,f)$ and $W_-(P,f)$:
\begin{theorem} [\cite{reichardtSpanProgramsQuantum2009,itoApproximateSpanPrograms2019}]
For $X\subseteq[q]^n$ and $f:X\rightarrow \{0,1\}$, let $P$ be a span program that decides $f$. Then there is a quantum algorithm that for any $x\in X$, evaluates $f(x)$ with bounded error, and uses $O\left(\sqrt{W_+(P,f)W_-(P,f)}\right)$ queries to the oracle $O_x$.
\label{thm:spEval}
\end{theorem}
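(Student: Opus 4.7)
The plan is to follow the standard Reichardt construction: build from $P$ and the oracle a unitary $U(x)$ whose spectrum near phase $0$ witnesses $f(x)$, then apply a Grover-like iteration based on Phase Reflection to distinguish positive from negative instances. First, I would augment $P$ by a one-dimensional space $\text{span}\{\ket{\hat 0}\}$ and extend $A$ to $\hat A$ by setting $\hat A\ket{\hat 0}=-\tau$. Define projections $\Lambda = \Pi_{\ker \hat A}$ and $\Pi_x = \Pi_{H(x)\oplus\text{span}\{\ket{\hat 0}\}}$, and form $U(x) = (2\Pi_x - I)(2\Lambda - I)$. Each application of $U(x)$ uses only $O(1)$ oracle queries, because $\Lambda$ depends only on $P$ while $\Pi_x$ is built from the oracle-indexed subspaces $H_{j,x_j}$, which can be projected onto by a single query followed by its inverse.

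Next, I would analyze the near-phase-$0$ spectrum of $U(x)$ in both cases. If $f(x) = 1$ and $\ket w$ is an optimal positive witness, the vector $\ket w + \ket{\hat 0}$ lies in both $\ker \hat A$ and $H(x)\oplus\text{span}\{\ket{\hat 0}\}$, hence is a $+1$ eigenvector of $U(x)$; its overlap with $\ket{\hat 0}$ gives $\|P_0(U(x))\ket{\hat 0}\|^2 \ge 1/(1+w(P,x)) \ge 1/(1+W_+)$. If $f(x) = 0$ and $\omega$ is an optimal negative witness, the vector $\ket v = (\omega\hat A)^\dagger$ satisfies $\Lambda \ket v = 0$ and $\Pi_x \ket v = -\ket{\hat 0}$ (using $\omega\tau = 1$ and $\omega A\,\Pi_{H(x)} = 0$); the effective spectral gap lemma (\cref{spec_gap_lemm}) then gives $\|P_\Theta(U(x))\ket{\hat 0}\|^2 \le (\Theta/2)^2 (1+w(P,x)) \le \Theta^2(1+W_-)/4$.

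Third, I would set the precision $\Theta = c/\sqrt{W_-}$ for a small constant $c$, take the Phase Reflection accuracy $\epsilon$ suitably small, and iterate $R(U(x))\,(2\proj{\hat 0}-I)$ (from \cref{lem:phase_refl}) for $k = O(\sqrt{W_+})$ steps on the input $\ket{\hat 0}\ket{0}_B$, then measure overlap with $\ket{\hat 0}$. In the positive case, $\ket{\hat 0}$ has overlap $\Omega(1/\sqrt{W_+})$ with a $+1$ eigenvector of $U(x)$, so the iteration rotates $\ket{\hat 0}$ significantly away from itself, in the standard Grover manner. In the negative case, the bound above with $c$ small forces $\ket{\hat 0}$ to lie almost entirely in $\overline{P}_\Theta(U(x))$, so each Phase Reflection acts as $-I$ up to accuracy $\epsilon$ and the iteration keeps the state close to $\pm\ket{\hat 0}$. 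Per-iteration cost is $O(1/\Theta) = O(\sqrt{W_-})$ queries, giving a total of $O(\sqrt{W_+ W_-})$.

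The main obstacle is controlling the $\epsilon$-sized approximation errors across $O(\sqrt{W_+})$ iterations of Phase Reflection: taking $\epsilon$ smaller than $1/\sqrt{W_+}$ by a constant factor keeps the accumulated error bounded and the two cases distinguishable with bounded error, at the cost of only a polylogarithmic factor in the per-reflection query count, which does not disturb the overall $O(\sqrt{W_+ W_-})$ scaling claimed by the theorem. A secondary subtlety is verifying that the ancilla register $B$ used by $R(U(x))$ interacts correctly with the Grover reflection $2\proj{\hat 0}-I$, which acts trivially on $B$; this is routine once one tracks the $B=\ket{0}$ subspace through the iteration using \cref{lem:phase_refl}.
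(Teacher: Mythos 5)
Your spectral analysis of $U(x)$ is fine (the positive-case $+1$ eigenvector $\ket{w}+\ket{\hat 0}$ and the negative-case bound via \cref{spec_gap_lemm} are exactly the right ingredients), but the algorithmic step built on top of it does not work, and the failure is in the negative case. With the unscaled augmentation $\hat A\ket{\hat 0}=-\tau$ and precision $\Theta=c/\sqrt{W_-}$, the effective spectral gap lemma only gives $\|P_\Theta(U(x))\ket{\hat 0}\|^2\leq c^2/4$, i.e.\ a \emph{constant}, while your positive-case guarantee is $\|P_0(U(x))\ket{\hat 0}\|^2\geq 1/(1+W_+)$, which is far \emph{smaller} than that constant. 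So the two cases are not separated by the overlap with the near-phase-$0$ space at this precision, and the Grover iterate $R(U(x))(2\proj{\hat 0}-I)$ cannot distinguish them: in the negative case the residual overlap $\sin\theta_-\leq c/2$ drives a rotation of angle roughly $2k\theta_-\approx c\sqrt{W_+}$ over your $k=O(\sqrt{W_+})$ iterations, so the state generically does \emph{not} stay near $\pm\ket{\hat 0}$ — it can rotate fully away, just as in the positive case. (Taking $c$ smaller only helps if $c=o(1/\sqrt{W_+})$, i.e.\ $\Theta\lesssim 1/\sqrt{W_+W_-}$, but then each Phase Reflection costs $O(\sqrt{W_+W_-})$ queries and your total becomes $O(W_+\sqrt{W_-})$, exceeding the claimed bound; even at that precision the ESG bound in the negative case is $\approx \Theta^2 W_-/4\approx 1/(4W_+)$, only a constant factor below the positive-case signal, so amplitude amplification still cannot cleanly separate them.)

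The missing idea is the \emph{scaling} of the span program (equivalently, the $\alpha$ in \cref{eq:alphaIntro}): one takes $\hat A\ket{\hat 0}=-\tau/\alpha$ with $\alpha^2=\Theta(W_+)$, so the positive-case eigenvector $\alpha\ket{\hat 0}-\ket{w}$ has \emph{constant} overlap $\alpha^2/(\alpha^2+w(P,x))$ with $\ket{\hat 0}$, while the negative-case witness now has $\|\ket v\|^2=1+\alpha^2 w(P,x)=O(W_+W_-)$, so the ESG lemma with $\Theta=\Theta(1/\sqrt{W_+W_-})$ makes the negative-case overlap an arbitrarily small constant. Then no amplitude amplification is needed at all: a single run of Phase Checking (\cref{lem:phase_det}) at precision $\Theta=\Theta(1/\sqrt{W_+W_-})$ and constant accuracy, on input $\ket{\hat 0}_A\ket{0}_B$, distinguishes the two cases with bounded error using $O(1/\Theta)=O(\sqrt{W_+W_-})$ queries, with no log overhead from error accumulation. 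This is precisely the route the paper takes in \cref{lem:phase_est_early} (there with $\alpha^2\geq C\,w(P,x)$ and $\Theta=\sqrt{\epsilon/(\alpha^2W)}$), following Reichardt and Belovs--Reichardt; your proposal would need to be restructured along these lines rather than patched, since the unscaled-overlap-plus-amplification strategy cannot reach $O(\sqrt{W_+W_-})$.
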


Not only can any span program that decides a function $f$ be used to create a
quantum query algorithm that decides $f$, but there is always a span program
that creates an algorithm with asymptotically optimal query complexity
\cite{reichardtSpanProgramsQuantum2009,reichardtReflectionsQuantumQuery2011}.
Thus when designing quantum query algorithms for function decision problems,
it is sufficient to consider only span programs.

As noted in Ref.~\cite{itoApproximateSpanPrograms2019}, one can scale and
normalize a span program $P$ to create a new span program $P'$ such that:
\begin{itemize}
\item All positive and negative witnesses of $P'$ have value at least 1.
\item $\left|W_+(P',f)-W_-(P',f)\right|\leq 1$.
\item $W_+(P',f)W_-(P',f)=O(W_+(P,f)W_-(P,f))$
\end{itemize}
The first point is achieved by scaling the target vector, see
\cite[Definition 2.13]{itoApproximateSpanPrograms2019}, and the next points
are achieved by applying \cite[Theorem 2.14]{itoApproximateSpanPrograms2019}
with $\beta=(W_+(P,f)/W_-(P,f))^{1/4}.$ The final point ensures that
the query complexity of the algorithm produced by the scaled and normalized
span program is the same as the original span program.

Thus without loss of generality, we henceforward assume our span programs are scaled and normalized such that the maximum positive and negative witnesses have the same size, which we denote $W(P,f)$. When clear from context, we will drop the input parameters and refer to $W(P,f)$ as simply $W$.

Given this renormalization, we can restate \cref{thm:spEval} in a way that is more
conducive for comparison with our results:
\begin{corollary}\label{cor:spEval}
For $X\subseteq[q]^n$, let $P$ be a (scaled and normalized) span program that decides $f:X \rightarrow
\left\{0,1\right\}$ with witness size $W=\max_{\chi\in X}\w{P}{\chi}$.  Then there is a bounded error quantum algorithm that for every input $x\in X$ evaluates $f(x)$ and uses $O\left(W\right)$ queries to the oracle $O_x$.
\end{corollary}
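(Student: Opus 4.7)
The plan is to derive this corollary directly from \cref{thm:spEval} combined with the scaling and normalization properties just reviewed. First, I would observe that because every $x \in X$ has either a positive or a negative witness (but not both), the quantity $W = \max_{\chi \in X} \w{P}{\chi}$ is exactly $\max\{W_+(P,f), W_-(P,f)\}$. So proving the corollary reduces to showing that the factor $\sqrt{W_+(P,f)W_-(P,f)}$ appearing in \cref{thm:spEval} is $O(W)$.

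Next I would invoke the three bullet points recalled from Ref.~\cite{itoApproximateSpanPrograms2019}. After scaling, all witness values are at least $1$, and $|W_+(P,f)-W_-(P,f)|\leq 1$. Combined with $W = \max\{W_+, W_-\}$, this gives the chain
\begin{equation}
\sqrt{W_+(P,f)W_-(P,f)} \;\leq\; \sqrt{W\cdot W} \;=\; W,
\end{equation}
and, conversely, $\sqrt{W_+(P,f)W_-(P,f)} \geq \sqrt{W(W-1)} = \Omega(W)$ once $W\geq 2$; the trivial case $W = 1$ is absorbed into the $O(\cdot)$. Thus $\sqrt{W_+W_-} = \Theta(W)$.

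Finally, I would apply \cref{thm:spEval} to the scaled and normalized $P$ to obtain a bounded-error algorithm that evaluates $f(x)$ using $O(\sqrt{W_+(P,f)W_-(P,f)}) = O(W)$ queries to $O_x$, which is the stated bound. The only subtlety, and really the reason for the scaling step at all, is to certify that replacing the original $P$ with the renormalized one does not change the asymptotic query complexity; this is exactly the third bullet, $W_+(P',f)W_-(P',f) = O(W_+(P,f)W_-(P,f))$, so the $O(W)$ bound is meaningful for the original problem. No new technical machinery beyond these two ingredients is needed, so I do not anticipate a significant obstacle; the proof is essentially bookkeeping to translate \cref{thm:spEval}'s geometric mean expression into the single-parameter $W$ used throughout the remainder of the paper.
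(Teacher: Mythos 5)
Your proposal is correct and matches the paper's (implicit) argument: the corollary is stated as a direct restatement of \cref{thm:spEval} after the scaling/normalization step, using exactly the observation that $W=\max\{W_+(P,f),W_-(P,f)\}$ so that $\sqrt{W_+(P,f)W_-(P,f)}\leq W$. The additional lower bound $\sqrt{W_+W_-}=\Omega(W)$ you include is harmless but not needed, since only the upper bound enters the query-complexity claim.
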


We will also find it helpful to use a transformation that takes a span program
$P$ that decides a function $f:X\rightarrow\{0,1\}$ and creates a span program $P^\dagger$ that decides
$\neg f$, the negation of $f$, without increasing witness sizes. While such a transformation is known for Boolean span programs \cite{reichardtSpanProgramsQuantum2009}, we show it exists for the span programs of \cref{def:SP}. (The proof can be found in \cref{app:sec2}.)

\begin{restatable}{lemma}{SPduals}\label{lem:SP_duals}
Given a span program $P=(H,V,\tau,A)$ on $[q]^m$ that decides a function
$f:X\rightarrow \{0,1\}$ for $X\in [q]^m$, there exists a span program
$P^\dagger=(H',V',\tau',A')$ such that  $\forall x\in X, \w{P}{x}\geq
\w{P^\dagger}{x}$, and $P^\dagger$ decides $\neg f$.
\end{restatable}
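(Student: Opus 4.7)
The plan is to construct $P^\dagger$ directly, adapting the dualization of Boolean span programs from \cite{reichardtSpanProgramsQuantum2009} to handle the non-binary input structure of \cref{def:SP}. The guiding principle is that positive witnesses of $P^\dagger$ should correspond to negative witnesses of $P$ (and vice versa), with matching norms, so that the two conclusions of the lemma follow simultaneously.

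First, I would set up $P^\dagger = (H', V', \tau', A')$ by taking $H'$ to be (a subspace or mild augmentation of) $V$, $V'$ to be built from $H$, and $A'$ to be induced by the adjoint $A^*$. The target $\tau'$ is chosen so that the positive-witness equation $A' \ket{w'} = \tau'$ in $P^\dagger$ simultaneously encodes the two negative-witness conditions $\omega \tau = 1$ and $\omega A \PHx = 0$ from $P$, where the functional $\omega \in \sop L(V,\mathbb{R})$ is identified with a vector in $V$ via the inner product.

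Second, I would define the input subspaces $H'_{j,a}$ so that a vector $\ket{v} \in V$ lies in $H'_{j,a}$ precisely when $\bra{v} A \Pi_{H_{j,a}} = 0$. The condition $\sum_a H'_{j,a} = H'_j$ required by \cref{def:SP} would then be verified using the identity $\left(\sum_a H_{j,a}\right)^\perp = \bigcap_a H_{j,a}^\perp$ applied within $H_j$, combined with the hypothesis $\sum_a H_{j,a} = H_j$.

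Third, I would establish the witness correspondence explicitly. Given a positive witness $\ket{w}$ for $x$ in $P$ (with $A\ket{w}=\tau$ and $\ket{w}\in H(x)$), I would construct a negative witness for $x$ in $P^\dagger$ directly from the components of $\ket{w}$, with $\|\omega' A'\|^2 \le \|\ket{w}\|^2$. Conversely, given a negative witness $\omega$ for $x$ in $P$ (with $\omega\tau=1$ and $\omega A \PHx=0$), I would identify $\omega$ with a vector in $H'(x)$ satisfying the positive-witness equation of $P^\dagger$, with $\|\ket{w'}\|^2 \le \|\omega A\|^2$. Together these give both that $P^\dagger$ decides $\neg f$ and that $\w{P^\dagger}{x} \le \w{P}{x}$ for every $x \in X$.

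The main obstacle I expect is the non-binary input structure. In the Boolean case each $H_j$ splits as $H_{j,0}\oplus H_{j,1}$, and the dual input subspaces are obtained by a clean swap; here only $\sum_a H_{j,a}=H_j$ is required, and the $H_{j,a}$ may overlap. This forces the dual subspaces to be defined through orthogonal complements inside $H_j$, and the verification that they again satisfy $\sum_a H'_{j,a}=H'_j$ is the nontrivial step. A secondary concern is ensuring no multiplicative slack in the witness-size bounds; I expect this to fall out because the natural candidate witnesses on each side match in norm exactly, once $\tau'$ and the identification between $V$ and functionals are fixed consistently.
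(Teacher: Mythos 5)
Your guiding principle (positive witnesses of $P^\dagger$ should be negative witnesses of $P$ and vice versa, with matching norms) is the right one, but the concrete route has a genuine gap in the norm bookkeeping. You place the dual positive witness in $V$ by identifying the negative witness $\omega\in\sop L(V,\mathbb{R})$ with its vector via the inner product. But the positive witness size in $P^\dagger$ is the norm of the witness vector itself, which under this identification is $\|\omega\|^2$, not $\|\omega A\|^2$, and these can differ arbitrarily: rescale $A$ and $\tau$ by a small $\delta$, so positive witnesses and $\|\omega A\|$ are unchanged while $\omega\tau=1$ forces $\|\omega\|\geq 1/\|\tau\|\to\infty$. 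So the claimed bound $\|\ket{w'}\|^2\leq\|\omega A\|^2$ cannot be arranged by choosing the identification ``consistently''; it forces the dual witness to be the vector $(\omega A)^\dagger$, which lives in $H$, not $V$. That is what the paper does: $H'$ is built inside $H$ (with $H'_{j,a}=H_j\cap H_{j,a}^\perp$ and $H'_{\textrm{true}}=H_{\textrm{false}}$, $H'_{\textrm{false}}=H_{\textrm{true}}$), $V'=H\oplus\textrm{span}\{\ket{\tilde{0}}\}$, $\tau'=\ket{\tilde{0}}$, and, crucially, $A'=\ketbra{\tilde{0}}{w_0}+\Pi_H\Lambda_A$, where $\ket{w_0}$ is the minimal-norm preimage of $\tau$ and $\Lambda_A$ projects onto $\ker A$. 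These two ingredients, which ``$A'$ induced by the adjoint $A^*$'' does not capture, are exactly what make the other direction work with the right norm: for a positive witness $\ket{w}=\ket{w_0}+\ket{w^\perp}$ of $P$, the functional $\omega'=\bra{\tilde{0}}+\bra{w}$ gives $\omega'A'=\bra{w_0}+\bra{w}\Lambda_A=\bra{w}$ exactly, hence size $\|\ket{w}\|^2$; a plain adjoint (or a rank-one $\ketbra{\tau'}{\tau}$) produces $\|\tau\|$-type norms instead.

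A second, structural problem: defining every $H'_{j,a}$ as a subspace of the single space $V$ (all $\ket{v}$ with $\bra{v}A\Pi_{H_{j,a}}=0$) does not yield a span program in the sense of \cref{def:SP}, which requires $H'$ to be a direct sum $H'_1\oplus\cdots\oplus H'_n\oplus H'_{\textrm{true}}\oplus H'_{\textrm{false}}$; your $H'_j$ overlap (any $\ket{v}$ with $\bra{v}A=0$ lies in all of them). Repairing this with external copies $\ket{j}\otimes V$ breaks the encoding: the condition $\omega A\PHx=0$ is a simultaneous requirement over all $j$, which you encode as simultaneous membership in every $H'_{j,x_j}$; with disjoint copies, placing $\omega$ in one slot certifies only one coordinate (creating spurious positive witnesses of $P^\dagger$ on inputs where $\neg f(x)=0$), while spreading it over slots inflates the norm. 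The paper avoids this because its $H'_{j,a}\subseteq H_j$ inherit the direct-sum structure of $H$, and $\omega A\PHx=0$ becomes the single statement that $(\omega A)^\dagger\perp H(x)$, i.e.\ $(\omega A)^\dagger\in H'(x)$. (Your closing remark about ``complements inside $H_j$'' is in fact the paper's choice, but it is inconsistent with the $H'\subseteq V$ setup of your second step; and the verification $\sum_a H'_{j,a}=H'_j$ you flag as the hard step is a non-issue, since one simply defines $H'_j$ to be that sum.)
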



\subsection{State Conversion and Filtered \texorpdfstring{$\gamma_2$}{gamma-2} Norm}\label{sec:dualAdvIntro}

In the state conversion problem, for $X\subseteq [q]^n$, we are given descriptions of sets of pure
states $\{\ket{\rho_x}\}_{x\in X}$ and $\{\ket{\sigma_x}\}_{x\in X}$. (Moving forward, we will simply write $\{\ket{\rho_x}\}$ and $\{\ket{\sigma_x}\}$.) Then given access to an oracle for $x$, and the quantum state
$\ket{\rho_x}$, the goal is to create a state $\ket{\sigma_x'}$ such that
$\|\ket{\sigma_x'}-\ket{\sigma_x}\ket{0}\|\leq \varepsilon$. We call $\varepsilon$
the error of the state conversion procedure\footnote{We only consider what in Ref. \cite{leeQuantumQueryComplexity2011} is called the coherent state conversion problem. We are concerned with algorithms rather than lower bounds, and the algorithm we describe also applies to the less restrictive non-coherent problem.}.

The filtered $\gamma_2$ norm is used in designing quantum algorithms for state conversion:
\begin{definition} [Filtered $\gamma_2$ norm]\label{def:Gamma2}
Let $B$ be a matrix whose rows and columns are indexed by the elements of a set $X$. Let $Z=\{Z_1,\dots Z_n\}$ be a set of matrices whose rows and columns are indexed by the elements of $X$. Define $\gamma_2(B|Z)$ as
\begin{align}
\gamma_2(B|Z)=&\min_{\substack{m\in \mathbb{N}\\\ket{u_{xj}},\ket{v_{yj}}\in \mathbb{C}^m}}\max\left\{\max_{x\in X}\sum_j\|\ket{u_{xj}}\|^2,\max_{x\in X}\sum_j\|\ket{v_{yj}}\|^2\right\}\label{eq:filteredNormMa} \\
&\textrm{ s.t. }\forall x,y\in X, B_{xy}=\sum_{j=1}^n(Z_j)_{xy}\braket{u_{xj}}{v_{yj}}.\label{eq:filteredNorm}
\end{align}
\end{definition}

Let $\rho$ and $\sigma$ be the Gram matrices of the sets $\{\ket{\rho_x}\}$ and $\{\ket{\sigma_x}\}$, respectively. In other words, $\rho$ and $\sigma$ are matrices whose rows and columns are indexed by the elements of $X$ such that $\rho_{xy}=\braket{\rho_x}{\rho_y},$ and $\sigma_{xy}=\braket{\sigma_x}{\sigma_y}.$
Let $\Delta=\{\Delta_1, \dots,\Delta_n\}$ be a set of matrices whose rows and columns are indexed by the elements of $X$ such that the element in the $x$th row and $y$th column of $\Delta_j$ is $1$ if the $j$th element of $x$ and $y$ differ, and $0$ if they are the same: $(\Delta_j)_{xy}=1-\delta_{x_j,y_j}.$

Then the query complexity of state conversion is characterized as follows:

\begin{theorem} [\cite{leeQuantumQueryComplexity2011}]
Given $X\subseteq[q]^n$, and sets of states $\{\ket{\rho_x}\}$ and
$\{\ket{\sigma_x}\}$ for each $x\in X$, with respective Gram matrices $\rho$ and
$\sigma$, the query complexity of state conversion with error $\varepsilon$ is
$O\left(\gamma_2(\rho-\sigma|\Delta)\frac{\log(1/\varepsilon)}{\varepsilon^2}\right)$.
\label{thm:StateConv}
\end{theorem}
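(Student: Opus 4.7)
The plan is to follow the algorithmic strategy underlying the filtered $\gamma_2$ upper bound. Let $W=\gamma_2(\rho-\sigma|\Delta)$ and take vectors $\{\ket{u_{xj}}\}$ and $\{\ket{v_{yj}}\}$ that attain the optimum in \cref{def:Gamma2}, so $\sum_j\|\ket{u_{xj}}\|^2,\sum_j\|\ket{v_{yj}}\|^2\le W$ and $(\rho-\sigma)_{xy}=\sum_j(\Delta_j)_{xy}\braket{u_{xj}}{v_{yj}}$. From these I would build, in an enlarged Hilbert space with extra index/coordinate registers, lifted versions of the source and target states of roughly the form
\begin{equation*}
\ket{\alpha_x}=\ket{0}\ket{\rho_x}\ket{\bar 0}+\tfrac{1}{\sqrt{W}}\sum_j\ket{1}\ket{j,x_j}\ket{u_{xj}},\qquad \ket{\beta_x}=\ket{0}\ket{\sigma_x}\ket{\bar 0}+\tfrac{1}{\sqrt{W}}\sum_j\ket{1}\ket{j,x_j}\ket{v_{xj}}.
\end{equation*}
The $\gamma_2$ constraint then forces clean relations between the inner products $\braket{\alpha_x}{\beta_y}$ as $x,y$ vary, in a form that is independent of the oracle string.

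The core of the algorithm is to define two projectors $\Pi$ and $\Lambda$ on this lifted space: an input-independent $\Pi$ that enforces the structural relations dictated by the lift, and an input-dependent $\Lambda$ that simply checks that the stored coordinate value in the $\ket{j,x_j}$ register is consistent with $O_x$. Because $\Lambda$ only compares a stored value with an oracle query, the reflection $2\Lambda-I$ can be implemented with $O(1)$ uses of $O_x$. Using the $\gamma_2$ identities I would then exhibit a witness vector $\ket{w}$ with $\Lambda\ket{w}=0$ whose image $\Pi\ket{w}$ is a scalar multiple of $\ket{\alpha_x}-\ket{\beta_x}$; this is the state-conversion analogue of the positive/negative witness dichotomy recalled in \cref{def:posWitness,def:negWit}.

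With this structure in place, the algorithm prepares $\ket{\rho_x}$, lifts it coherently to $\ket{\alpha_x}$, applies the phase reflection $R(U)$ of \cref{lem:phase_refl} with $U=(2\Pi-I)(2\Lambda-I)$ at precision $\Theta\approx\varepsilon/\sqrt{W}$ and accuracy $\epsilon\approx\varepsilon^2$, and uncomputes the lift. By \cref{spec_gap_lemm}, the witness component of $\ket{\alpha_x}$ lies in the low-phase subspace of $U$ up to small error, so $R(U)$ reflects it and rotates $\ket{\alpha_x}$ toward $\ket{\beta_x}$ up to error $O(\varepsilon)$. One invocation of $R(U)$ uses $O\bigl(\tfrac{\sqrt{W}}{\varepsilon}\log\tfrac{1}{\varepsilon}\bigr)$ queries, and amplifying the $\ket{0}$-ancilla branch of the uncompute step to high fidelity with $\ket{\sigma_x}$ via amplitude amplification contributes an additional $O(1/\varepsilon)$ factor, giving the advertised total $O\bigl(W\log(1/\varepsilon)/\varepsilon^2\bigr)$.

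The main obstacle is the interlocking design of $\Pi$, $\Lambda$, and the lifted states so that all three properties hold at once: (i) $\Lambda$ is implementable with $O(1)$ queries; (ii) a witness with $\Lambda\ket{w}=0$ and $\Pi\ket{w}\propto\ket{\alpha_x}-\ket{\beta_x}$ actually exists, so that \cref{spec_gap_lemm} identifies the conversion direction as the low-phase eigenspace of $U$; and (iii) the ancillary registers introduced by the lift can be coherently uncomputed so that the output is genuinely $\varepsilon$-close to $\ket{\sigma_x}\ket{0}$ as a pure state, not merely in a reduced description. Getting the normalizations, ancilla layouts, and phases right so that these conditions hold simultaneously is the delicate bookkeeping at the heart of the Lee--Mittal--Reichardt--\v{S}palek--Szegedy construction, and is what drives the final $\varepsilon^{-2}$ dependence.
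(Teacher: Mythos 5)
First, note that the paper does not prove \cref{thm:StateConv}: it is quoted from Lee et al.\ \cite{leeQuantumQueryComplexity2011}, and the closest in-paper material is the adaptation of that construction in \cref{sec:StateConv} and \cref{app:proofs}. Your outline has the right flavor---a unitary $U=(2\Pi-I)(2\Lambda-I)$ with one reflection input-independent and the other costing $O(1)$ queries, analyzed with \cref{lem:phase_refl} and \cref{spec_gap_lemm}---but two of your steps are genuinely broken. The step ``prepares $\ket{\rho_x}$, lifts it coherently to $\ket{\alpha_x}$ \ldots{} and uncomputes the lift'' is not implementable: the component $\sum_j\ket{1}\ket{j,x_j}\ket{u_{xj}}$ depends on the entire unknown string $x$ (each $\ket{u_{xj}}$ is indexed by the full input, not by a single queried coordinate), so producing $\ket{\alpha_x}$ from $\ket{\rho_x}$ is itself a state-conversion task at least as hard as the one being solved. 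In the actual construction the lifted vectors never enter the algorithm; they are analysis objects only. The algorithm applies $R(U)$ directly to $(\ket{0}\ket{\rho_x})_A\ket{0}_B=\tfrac{1}{\sqrt 2}(\ket{t_{x+}}+\ket{t_{x-}})_A\ket{0}_B$ with $\ket{t_{x\pm}}=\tfrac{1}{\sqrt2}(\ket{0}\ket{\rho_x}\pm\ket{1}\ket{\sigma_x})$, and correctness needs \emph{two} witness arguments: a ``negative'' one (your $\Lambda\ket{w}=0$, $\Pi\ket{w}\propto\ket{t_{x-}}$ witness, fed to \cref{spec_gap_lemm}) so that $\ket{t_{x-}}$ acquires a $-1$ phase, and a ``positive'' one exhibiting an exact $1$-eigenvector of the form $\ket{t_{x+}}+O(\sqrt{\hat{\varepsilon}/W})\sum_j\ket{j}\ket{\nu_{x_j}}\ket{v_{xj}}$ so that $\ket{t_{x+}}$ is approximately fixed. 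Your sketch contains only the first; without the second nothing controls how $R(U)$ acts on the remaining half of the input state. Relatedly, because the constraint involves $(\Delta_j)_{xy}=1-\delta_{x_j,y_j}$, the coordinate registers must carry the vectors $\ket{\mu_i},\ket{\nu_i}$ of \cref{eq:mu_nu} (with $\braket{\mu_i}{\nu_j}\propto 1-\delta_{ij}$), not basis states $\ket{j,x_j}$, or the orthogonality identities behind both witnesses fail.

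Second, your parameter accounting is inconsistent, and the inconsistency hides the reason the bound is linear in $W=\gamma_2(\rho-\sigma|\Delta)$. The two witnesses pull against each other: keeping the perturbed $1$-eigenvector close to $\ket{t_{x+}}$ forces the coefficient multiplying $\ket{t_{x-}}$ in the defining vectors $\ket{\psi_x}$ to be $O(\sqrt{\hat{\varepsilon}/W})$, which makes the squared norm of the spectral-gap witness $\Theta(W^2/\hat{\varepsilon})$; the precision must therefore scale like $\hat{\varepsilon}/W$ (the paper's adaptation uses $\hat{\varepsilon}^{3/2}/\sqrt{\alpha W}$, again scaling as $1/W$ once $\alpha=\Theta(W)$), and a single application of $R(U)$ costs $O\bigl(\tfrac{W}{\hat{\varepsilon}}\log\tfrac{1}{\hat{\varepsilon}}\bigr)$ queries, which with final error $\varepsilon=\Theta(\sqrt{\hat{\varepsilon}})$ is exactly $O\bigl(W\log(1/\varepsilon)/\varepsilon^2\bigr)$. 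Your choice $\Theta\approx\varepsilon/\sqrt{W}$ plus an extra $O(1/\varepsilon)$ of amplitude amplification would give $O(\sqrt{W}\log(1/\varepsilon)/\varepsilon^2)$, which neither matches the theorem you are proving nor is achievable in general, since $\gamma_2(\rho-\sigma|\Delta)$ is (up to the error-dependent factors) also a lower bound; moreover the amplitude-amplification step has no coherent meaning here, as one cannot amplify or postselect on overlap with the unknown state $\ket{\sigma_x}$ without measuring. The $1/\varepsilon^2$ in the theorem comes from the precision and accuracy of one phase reflection, not from repetition.
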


Any set of vectors that satisfies the constraints of \cref{eq:filteredNorm} for 
$\gamma_2(\rho-\sigma|\Delta)$ can be used to create an algorithm to solve the state conversion problem from
$\{\ket{\rho_x}\}$ and $\{\ket{\sigma_x}\}$, although it will not necessarily be optimal
in terms of query complexity; its query complexity will depend on the value of the objective function in \cref{eq:filteredNormMa} for that set of vectors \cite{leeQuantumQueryComplexity2011}. We call such a set of vectors a \textit{converting
vector set} from $\rho$ to $\sigma$:

\begin{definition} [Converting vector set]
We say a set of vectors $\sop{P}=\left(\{\ket{u_{xj}}\},\{\ket{v_{yj}}\}\right)$ converts $\rho$ to $\sigma$
if it satisfies the constraints of \cref{eq:filteredNorm} for $B=\rho-\sigma$, and $Z=\Delta$. We call such a $\sop P$ a \em{converting vector set} from $\rho$ to $\sigma$.
\end{definition}
\noindent Note that for a converting vector set $\sop{P}=\left(\{\ket{u_{xj}}\},\{\ket{v_{yj}}\}\right)$  from $\rho$ to $\sigma$, given the constraints of \cref{eq:filteredNorm}
with $Z_j$ set to $\Delta_j$, we have that $\forall x,y\in X,$
\begin{equation}\label{eq:cvsSum}
\rho_{xy}-\sigma_{xy}=\sum_{j:x_j\neq y_j}\braket{u_{xj}}{v_{jy}}.
\end{equation}

Analogous to witness sizes in span
programs, we define a notion of witness sizes for converting vector sets:
\begin{definition} [Converting vector set witness sizes]
Given a converting vector set
$\sop{P}=\left(\{\ket{u_{xj}}\},\{\ket{v_{yj}}\}\right)$, we define the
witness sizes of $\sop P$  as
\begin{align}
&w_+(\sop P,x)\coloneqq\sum_j\|\ket{u_{xj}}\|^2 & &\textrm{positive witness size for $x$ in $\sop P$}\nonumber\\
&w_-(\sop P,x)\coloneqq\sum_j\|\ket{v_{xj}}\|^2 & &\textrm{negative witness size for $x$ in $\sop P$}\nonumber\\
&W(\sop {P})\coloneqq\max_{x\in X}\left\{\max \{w_+(\sop P,x),\, w_-(\sop P,x)\}\right\} & &\textrm{witness size of $\sop P$}
\end{align}
\end{definition}
If $\sop P$ is a converting vector set from $\rho$ to $\sigma$, then
the value of the objective function in \cref{eq:filteredNormMa} equals $W(\sop P)$, and thus the query complexity of converting from $\rho$ to $\sigma$ using $\sop P$ is $O\left(W(\sop P)\frac{\log(1/\varepsilon)}{\varepsilon^2}\right)$. When clear from context, we refer to $W(\sop P)$ as $W$.

The following two lemmas (whose proofs can be found in \cref{app:sec2}), provide some useful transformations for converting vector sets:
\begin{restatable}{lemma}{complment}\label{lem:complement}
If $\sop{P}=\left(\{\ket{u_{xj}}\},\{\ket{v_{yj}}\}\right)$ converts $\rho$ to $\sigma$, then there is a complementary
converting vector set $P^C=\left(\{\ket{u^C_{xj}}\},\{\ket{v^C_{yj}}\}\right)$ that also converts $\rho$ to $\sigma$, such that for all $x\in X$ and for all $j\in [n]$, we have $w_+(\sop P,x)=w_-(\sop P^C,x)$, and $w_-(\sop P,x)=w_+(\sop P^C,x)$; the complement exchanges the values of the positive and negative witness sizes.
\end{restatable}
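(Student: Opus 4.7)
The plan is to take the simple route of swapping the $u$ and $v$ vectors in the converting vector set and then verify the three required properties. Concretely, I would define
\[
\ket{u^C_{xj}}\coloneqq \ket{v_{xj}} \qquad\text{and}\qquad \ket{v^C_{xj}}\coloneqq \ket{u_{xj}}.
\]
With this choice, the witness-size swap is immediate from the definitions: $w_+(\sop P^C,x)=\sum_j\|\ket{u^C_{xj}}\|^2=\sum_j\|\ket{v_{xj}}\|^2=w_-(\sop P,x)$, and analogously $w_-(\sop P^C,x)=w_+(\sop P,x)$.

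The remaining task is to verify that $\sop P^C$ is still a converting vector set from $\rho$ to $\sigma$, i.e.\ that it satisfies \cref{eq:cvsSum} with $B=\rho-\sigma$ and $Z=\Delta$. First I would observe two structural facts: the entries $(\Delta_j)_{xy}=1-\delta_{x_j,y_j}$ are real and symmetric in $x,y$, and the matrix $\rho-\sigma$ is Hermitian because both $\rho$ and $\sigma$ are Gram matrices of vector families indexed by $X$. Using the original converting condition with the roles of $x$ and $y$ interchanged,
\[
(\rho-\sigma)_{yx}=\sum_{j}(\Delta_j)_{yx}\braket{u_{yj}}{v_{xj}}=\sum_{j}(\Delta_j)_{xy}\braket{u_{yj}}{v_{xj}},
\]
I would take the complex conjugate of both sides and invoke Hermiticity, $\overline{(\rho-\sigma)_{yx}}=(\rho-\sigma)_{xy}$, to obtain
\[
(\rho-\sigma)_{xy}=\sum_{j}(\Delta_j)_{xy}\overline{\braket{u_{yj}}{v_{xj}}}=\sum_{j}(\Delta_j)_{xy}\braket{v_{xj}}{u_{yj}}=\sum_{j}(\Delta_j)_{xy}\braket{u^C_{xj}}{v^C_{yj}},
\]
which is exactly the constraint needed for $\sop P^C$ to convert $\rho$ to $\sigma$.

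The only subtlety is the conjugation step, which is the reason a naive swap ``feels'' like it ought not work. The key observation that resolves it is that $\rho-\sigma$ is Hermitian, so passing through the transpose and the complex conjugate together lands back on the same entry. I do not expect any real obstacle; the lemma is essentially a bookkeeping statement exploiting that the filtered $\gamma_2$ norm is symmetric under interchanging the two families of vectors whenever the matrix being factored is Hermitian and the constraint matrices $Z_j$ are symmetric.
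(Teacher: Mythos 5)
Your proposal is correct and follows essentially the same route as the paper's proof: swap $\ket{u_{xj}}$ and $\ket{v_{xj}}$, note the witness sizes exchange trivially, and verify the converting constraint by combining the symmetry of the $\Delta_j$ with the Hermiticity of $\rho-\sigma$ via complex conjugation of the $(y,x)$ constraint. The paper phrases this as a general statement for any Hermitian $B$ and symmetric $Z_j$, but the conjugation-plus-Hermiticity argument is identical to yours.
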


\begin{restatable}{lemma}{CVStransform}\label{lem:CVStransform1}
Given
$\sop{P}=\left(\{\ket{u_{xj}}\},\{\ket{v_{yj}}\}\right)$ that converts $\rho$ to $\sigma$,
there exists a normalized converting vector set
$\sop{P}'=\left(\{\ket{u'_{xj}}\},\{\ket{v'_{yj}}\}\right)$ from $\rho$ to $\sigma$ such
that $W(\sop P')\leq W(\sop P)$ and $\max_{x\in X} w_+(\sop P',x)=\max_{x\in X} w_-(\sop P',x)$.
\end{restatable}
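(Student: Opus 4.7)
The plan is to mimic the span-program renormalization of \cite{itoApproximateSpanPrograms2019} (the paragraph immediately preceding \cref{cor:spEval}) but adapted to converting vector sets. The key observation is that for any real $\beta>0$, multiplying every $\ket{u_{xj}}$ by $\beta$ and every $\ket{v_{yj}}$ by $\beta^{-1}$ preserves each inner product $\braket{u_{xj}}{v_{yj}}$, hence preserves the defining identity \cref{eq:cvsSum} and leaves us with another converting vector set from $\rho$ to $\sigma$. The positive witness sizes get rescaled by $\beta^{2}$, and the negative witness sizes by $\beta^{-2}$, so we can dial the two maxima into balance by choosing $\beta$ correctly.

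Concretely, I would set
\[
W_+ \coloneqq \max_{x\in X} w_+(\sop P,x),\qquad W_- \coloneqq \max_{x\in X} w_-(\sop P,x),
\]
and (after handling the trivial case in which one of $W_\pm$ is zero, in which case $\rho=\sigma$ on the relevant entries and we can take zero vectors) pick $\beta = (W_-/W_+)^{1/4}$. Defining
\[
\ket{u'_{xj}} \coloneqq \beta\,\ket{u_{xj}},\qquad \ket{v'_{yj}} \coloneqq \beta^{-1}\,\ket{v_{yj}},
\]
the first step is to verify that $\sop{P}'=(\{\ket{u'_{xj}}\},\{\ket{v'_{yj}}\})$ is still a converting vector set: for every $x,y\in X$,
\[
\sum_{j:x_j\neq y_j}\braket{u'_{xj}}{v'_{yj}} = \sum_{j:x_j\neq y_j}\braket{u_{xj}}{v_{yj}} = \rho_{xy}-\sigma_{xy},
\]
where the last equality is \cref{eq:cvsSum} applied to $\sop P$.

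Next I would compute the new witness sizes. Directly from the definitions,
\[
w_+(\sop P',x) = \beta^{2}\, w_+(\sop P,x),\qquad w_-(\sop P',x) = \beta^{-2}\, w_-(\sop P,x),
\]
so $\max_x w_+(\sop P',x) = \beta^{2}W_+ = \sqrt{W_+W_-}$ and $\max_x w_-(\sop P',x) = \beta^{-2}W_- = \sqrt{W_+W_-}$, giving the desired equality. Finally,
\[
W(\sop P') = \sqrt{W_+W_-} \;\leq\; \max\{W_+,W_-\} = W(\sop P),
\]
which is the remaining bound. There is no real obstacle here; the only thing to be careful about is the degenerate case where $W_+$ or $W_-$ vanishes, which I would dispense with in a single sentence by noting that in that case one of the witness-size maxima is already zero and the other can be made zero as well by rescaling to $\ket{u_{xj}}=0$ or $\ket{v_{yj}}=0$ (which is consistent with $\rho_{xy}=\sigma_{xy}$ in the relevant rows/columns).
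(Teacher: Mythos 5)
Your proposal is correct and follows essentially the same route as the paper's proof: rescale every $\ket{u_{xj}}$ by a constant and every $\ket{v_{yj}}$ by its inverse, note that this preserves the inner products $\braket{u_{xj}}{v_{yj}}$ and hence the converting property, and choose the constant to balance the two witness-size maxima. In fact your exponent $\beta=(W_-/W_+)^{1/4}$ is the right one: the factor $\sqrt{W_-/W_+}$ displayed in the paper's proof, applied directly to the kets, would merely swap $\max_x w_+$ and $\max_x w_-$ rather than equalize them, so the fourth root (making both maxima equal to the geometric mean $\sqrt{W_+W_-}\leq\max\{W_+,W_-\}$) is evidently what was intended, and your explicit treatment of the degenerate case $W_+W_-=0$ is a harmless extra.
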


Thus without loss of generality, we henceforward assume our converting vector
sets are normalized such that the maximum positive and negative witness sizes
are the same size.  In particular, given \cref{lem:CVStransform1}, and using the algorithm
for state conversion \cite{leeQuantumQueryComplexity2011}, we can restate
\cref{thm:StateConv} in a way that will be more conducive to comparing to our result:
\begin{corollary}\label{cor:stateConv}
Let $\sop{P}$ be a (normalized) converting
vector set from $\rho$ to $\sigma$ with witness size $W=\min\left\{\max_{\chi\in X}w_+(\sop P,\chi),\max_{\chi\in X}w_-(\sop P,\chi)\}\right\}$. Then there is quantum algorithm that on every input $x\in X$ converts $\ket{\rho_x}$ to $\ket{\sigma_x}$ with error $\varepsilon$ and uses 
$\tilde{O}\left(W/\varepsilon^2\right)$ queries to the oracle $O_x$.
\end{corollary}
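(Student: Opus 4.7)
The plan is to chain together the normalization hypothesis with Theorem \ref{thm:StateConv}; the argument is essentially a restatement that converts the parameter $\gamma_2(\rho-\sigma|\Delta)$ into the more concrete witness size $W$, and there are no hard steps to speak of.

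First I would observe that a converting vector set $\sop P = (\{\ket{u_{xj}}\},\{\ket{v_{yj}}\})$ from $\rho$ to $\sigma$ is, by definition, a feasible point for the minimization in \cref{def:Gamma2} with $B = \rho-\sigma$ and $Z = \Delta$. The value of the objective function \eqref{eq:filteredNormMa} attained by $\sop P$ is exactly $W(\sop P) = \max_{x\in X}\max\{w_+(\sop P, x), w_-(\sop P, x)\}$. As already noted in the paragraph following \cref{thm:StateConv}, any such feasible $\sop P$ can itself be fed into the state conversion algorithm of Ref.~\cite{leeQuantumQueryComplexity2011}, producing a bounded-error algorithm with query complexity $O\!\left(W(\sop P)\,\log(1/\varepsilon)/\varepsilon^2\right)$.

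Next I would invoke the normalization hypothesis supplied by \cref{lem:CVStransform1}, namely $\max_{\chi\in X} w_+(\sop P, \chi) = \max_{\chi\in X} w_-(\sop P, \chi)$. Under this equality, the minimum and maximum of these two quantities coincide, so the $W = \min\{\max_\chi w_+(\sop P,\chi),\max_\chi w_-(\sop P,\chi)\}$ in the statement agrees with $W(\sop P)$. Substituting this identification into the bound from the previous paragraph, and absorbing the factor $\log(1/\varepsilon)$ into the $\tilde O$ notation, yields query complexity $\tilde O(W/\varepsilon^2)$.

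The only subtle point worth flagging is that the algorithm must be built from the given $\sop P$ rather than from some unrelated optimizer of $\gamma_2(\rho-\sigma|\Delta)$; the result of Ref.~\cite{leeQuantumQueryComplexity2011} is precisely what guarantees this, and the normalization hypothesis is what lets the two different expressions for $W$ line up. No estimates beyond these bookkeeping steps are required.
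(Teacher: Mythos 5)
Your proposal is correct and follows the same route the paper intends: feed the given (normalized) converting vector set into the Lee et al.\ state conversion algorithm, whose query complexity is governed by the objective value $W(\sop P)$ of that particular feasible point, and use the normalization $\max_\chi w_+(\sop P,\chi)=\max_\chi w_-(\sop P,\chi)$ from \cref{lem:CVStransform1} to identify this value with the $W$ in the statement, absorbing the $\log(1/\varepsilon)$ factor into $\tilde O$. This matches the paper's (implicit) derivation of \cref{cor:stateConv}, so there is nothing further to add.
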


We will make use of the unit vectors $\{\ket{\mu_i}\}_{i\in [q]}$ and $\{\ket{\nu_i}\}_{i\in [q]}$, which are commonly seen in dual adversary algorithms like state-conversion:
\begin{equation}\label{eq:mu_nu}
\ket{\mu_i}=-\alpha\ket{i}+\sqrt{\frac{1-\alpha^2}{q-1}}\sum_{j\neq i}\ket{j}, \qquad 
\ket{\nu_i}=\sqrt{1-\alpha^2}\ket{i}+\frac{\alpha}{\sqrt{q-1}}\sum_{j\neq i}\ket{j},
\end{equation}
where $\alpha=\sqrt{1/2-\sqrt{q-1}{q}}$. These states have the property that
$\braket{\mu_i}{\nu_j}=\frac{q}{2(q-1)}(1-\delta_{i,j}).$


\section{Function Evaluation}\label{sec:func}

In this section, we prove 

\begin{theorem}\label{thm:spEvalNew}
For $X\subseteq[q]^n$, let $P$ be a (scaled and normalized) span program that decides $f:X \rightarrow
\left\{0,1\right\}$ with witness size $W=\max_{\chi\in X}\w{P}{\chi}$.  Then there is a quantum algorithm that for any $x\in X$ evaluates $f(x)$ with probability $1-O(\delta)$ and uses $\tilde{O}\left(\sqrt{\w{P}{x}W}\log(1/\delta)\right)$ queries to the oracle $O_x$.
\end{theorem}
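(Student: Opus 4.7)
The approach extends the exponential-search idea used to remove the promise from Grover's algorithm. The algorithm iterates over powers of two $K=1,2,4,\dots,W$ and at each $K$ invokes a subroutine $\sop{A}_K$ that runs the standard span-program algorithm of \cref{cor:spEval} at a resolution tuned to witness size $K$ rather than the worst-case $W$; the subroutine uses $\tilde O(\sqrt{KW})$ queries and, whenever $\w{P}{x}\le K$, is designed to output $f(x)$ with high probability. Each $\sop{A}_K$ is paired with a completion flag that self-certifies whether its output is reliable. The global algorithm halts at the first iteration at which either $\sop{A}_K$ (built from $P$) or its companion built from the dual span program $P^\dagger$ of \cref{lem:SP_duals} raises its flag, and outputs the corresponding answer. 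Running the two chains together guarantees that whichever of $f(x)=0$ and $f(x)=1$ is the truth is certified at iteration $i^\star=\lceil\log \w{P}{x}\rceil$, and summing the geometric series of budgets $\tilde O(\sqrt{K_iW})$ gives $\tilde O(\sqrt{\w{P}{x}W})$ queries in total.

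\textbf{Building the flag.} Each iteration is built around phase checking (\cref{lem:phase_det}) of the span-program unitary $U(x)=(2\Pi_{\ker A}-I)(2\PHx-I)$, applied to the initial state $\ket{\psi_0}$ derived from the scaled target vector, at precision $\Theta_K=\Theta(1/\sqrt{KW})$ and at accuracy inverse-polynomial in $K$; the circuit $D(U(x))$ then uses $\tilde O(\sqrt{KW})$ queries. Standard span-program analysis yields two complementary bounds: if $f(x)=1$ and $\w{P}{x}\le K$, then $\|P_0(U(x))\ket{\psi_0}\|^2=\Omega(1/K)$; if $f(x)=0$, then the effective spectral gap lemma (\cref{spec_gap_lemm}) gives $\|P_{\Theta_K}(U(x))\ket{\psi_0}\|^2=O(1/K)$ with a strictly smaller hidden constant. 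The flag is produced by using amplitude estimation (\cref{lem:ampEst}) to estimate the $\ket{0}_B$-acceptance probability of $D(U(x))$ at constant relative precision and comparing it against a $K$-dependent threshold that sits in the gap between the two bounds. Because the estimation is relative rather than absolute, its overhead is polylogarithmic, preserving the $\tilde O(\sqrt{KW})$ per-iteration query budget.

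\textbf{Complexity and success probability.} Summing $\tilde O(\sqrt{K_iW})$ over $i=0,1,\dots,i^\star$ dominates in the final term and yields $\tilde O(\sqrt{\w{P}{x}W})$ queries. To reach overall success probability $1-O(\delta)$ I would boost each iteration's flag to error $O(\delta/\log W)$ via a median of $O(\log(1/\delta))$ independent repetitions, which produces the explicit $\log(1/\delta)$ factor in the final bound, and then apply a union bound over the $O(\log W)$ iterations to control the probability of a spurious flag firing.

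\textbf{Main obstacle.} The delicate step is simultaneous control of the flag on both positive and negative inputs across all $\Theta(\log W)$ iterations. Because both acceptance-probability bounds shrink as $1/K$, the threshold has to track $K$, and a naive additive estimate would add a factor of $K$ to the per-iteration cost. The resolution is to isolate a constant multiplicative gap between the two probabilities rather than a shrinking additive gap, so that amplitude estimation needs only constant relative precision and costs just a polylogarithmic overhead above the $O(1/\Theta_K)=O(\sqrt{KW})$ phase-checking cost. Verifying that the hidden constants in the positive- and negative-case bounds remain genuinely separated, uniformly in $K$, once the span program is scaled and normalized as in \cref{cor:spEval}, is the principal technical check, and the dualization provided by \cref{lem:SP_duals} is what lets us treat the negative side symmetrically and avoid a one-sided blow-up.
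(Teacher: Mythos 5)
There is a genuine gap, and it sits exactly where you flagged the ``main obstacle.'' In your scheme the unitary $U(x)$ and the initial state $\ket{\psi_0}$ do not depend on the round, so at the critical round $K\approx \w{P}{x}$ both the positive-case acceptance probability and the negative-case bound are of order $1/K$; distinguishing two probabilities of size $\Theta(1/K)$ that differ only by a constant multiplicative factor requires estimating a probability $p\sim 1/K$ to additive error $\Theta(1/K)$, i.e.\ to constant \emph{relative} error. Your resolution asserts this costs only a polylogarithmic overhead, but amplitude estimation (\cref{lem:ampEst}) to constant relative error on a probability $p$ requires $\Theta(1/\sqrt{p})=\Theta(\sqrt{K})$ invocations of the estimated circuit, not polylogarithmically many. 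Each invocation of the phase-checking circuit at precision $\Theta_K=\Theta(1/\sqrt{KW})$ costs $\tilde{O}(\sqrt{KW})$ queries, so the per-round cost becomes $\tilde{O}(K\sqrt{W})$ and the total becomes $\tilde{O}(\w{P}{x}\sqrt{W})$, which misses the claimed bound $\tilde{O}(\sqrt{\w{P}{x}W})$ by a factor of $\sqrt{\w{P}{x}}$ (in the worst case $W^{3/2}$ instead of $W$).

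The paper's proof avoids this by changing the object being phase-estimated rather than the estimation procedure: it augments the span program operator to $\Aa{\alpha}=\frac{1}{\alpha}\ketbra{\tau}{\hat{0}}+A$ and runs Phase Checking of $\U{P}{x}{\alpha}=(2\PtHx-I)(2\Lal{\alpha}-I)$ on the fixed, trivially preparable state $\ket{\hat{0}}$, with precision $\sqrt{\epsilon/(\alpha^2 W)}$. For positive inputs the vector $\alpha\ket{\hat 0}-\ket{w}$ is a $1$-eigenvector, so once $\alpha^2\geq 3\,\w{P}{x}$ the acceptance probability is $\frac{\alpha^2}{\alpha^2+\w{P}{x}}\geq 2/3$ --- a constant near $1$, not $\Omega(1/K)$ --- while for negative inputs the effective spectral gap lemma (\cref{spec_gap_lemm}), applied to $\ket{v}=\alpha(\omega\Aa{\alpha})^\dagger$ with $\|\ket{v}\|^2\leq 2\alpha^2 W$, keeps the probability below $3\epsilon=1/3$ (\cref{lem:phase_est_early}). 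With a constant-versus-constant gap, the flag is just a majority vote over $O(\log(\log W/\delta))$ independent repetitions of Phase Checking; no amplitude estimation is needed anywhere in the function-evaluation algorithm (it appears only in the state-conversion result). The rest of your architecture --- exponentially growing rounds, tolerating false negatives, using $P^\dagger$ from \cref{lem:SP_duals} to flag the $f(x)=0$ side symmetrically, and the union bound over $O(\log W)$ rounds --- matches the paper, but without the $\alpha$-boosting of the initial overlap the stated query bound does not follow.
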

\noindent The algorithm might use more than
the stated number of queries or output the incorrect value, but the
probability of either or both of these events occurring is  $O(\delta).$

Comparing \cref{thm:spEvalNew} to \cref{cor:spEval}, we see that in the worst case, when we have an input $x$ where 
$\w{P}{x}=\max_{\chi\in X}\w{P}{\chi}$, the performance of our algorithm is the same, up to log factors,
as the standard span program algorithm. However, when we have an instance $x$ with a smaller value of $\w{P}{x}$,
then our algorithm has improved query complexity, without having to know anything ahead of time about the size of $\w{P}{x}$.

Our algorithm makes use of a subroutine that is similar to the standard
span program algorithm, but which will almost never
output $1$ unless the function has value $1$, while it might output $0$ even if
the value of the function is $1$. In other words, the subroutine has a negligible
probability of a false positive, but a potentially large probability of a
false negative.

We repeatedly run this subroutine while exponentially increasing
a parameter $\alpha$. As $\alpha$ gets larger, the probability of a false
negative decreases, while the runtime of the algorithm increases. If we see a negative
outcome at an intermediate round, since we do not know if this value is a true negative or false
negative, we continue running. We stop when we get a positive outcome, or when
$\alpha$ reaches its maximum possible value, at which point the probability of getting a false negative is also negligible.

Given a span program $P=(H,V,\tau,A)$ on $[q]^n$, let $\tilde{H}=H\oplus \textrm{span}\{\ket{\hat{0}}\},$ and $\tilde{H}(x)=H(x)\oplus \textrm{span}\{\ket{\hat{0}}\},$
where $\ket{\hat{0}}$ is orthogonal to $H$ and $V$. Then
we define the linear operator $\Aa{\alpha} \in \sop L(\tilde{H},V)$ as
\begin{equation}\label{eq:alphaIntro}
\Aa{\alpha}=\frac{1}{\alpha}\ketbra{\tau}{\hat{0}}+A.
\end{equation}
Let $\Lal{\alpha} \in \sop L(\tilde{H},\tilde{H})$ be the orthogonal
projection onto the kernel of $\Aa{\alpha}$, and let $\PtHx\in \sop
L(\tilde{H},\tilde{H})$ be the projection onto $\tilde{H}(x).$
Finally, let $\U{P}{x}{\alpha}=(2\PtHx-I)(2\Lal{\alpha}-I)$. Note that
$2\PtHx-I$ can be implemented with two applications of $O_x$ \cite[Lemma
3.1]{itoApproximateSpanPrograms2019}, and $2\Lal{\alpha}-I$ can be implemented
without any applications of $O_x$.


\begin{algorithm}
    \DontPrintSemicolon
    \SetKwInOut{Input}{Input}
    \SetKwInOut{Output}{Output}
    \Input{Error tolerance $\delta$, span program $P$ that decides a function $f$, oracle $O_x$}
    \Output{$f(x)$ with probability $1-O(\delta)$}
    $N\gets 9\log\left(\left\lceil\log\sqrt{3W}\right\rceil/\delta\right)/2;\quad \epsilon\gets 1/9$\;
    \For{$i\geq 0$ \KwTo $\left\lceil\log\sqrt{3W}\right\rceil$ }{
     $\alpha\gets2^i$\;
     Repeat $N$ times: Phase Checking of $\U{P}{x}{\alpha}$ on $\ket{\hat{0}}_A\ket{0}_B$ with error $\epsilon$, precision $\sqrt{\frac{\epsilon}{\alpha^2 W}}$\;
     \lIf{\em{Measure $\ket{0}$ in register $B$ at least $N/2$ times}}{
      \Return 1
     }
     Repeat $N$ times: Phase Checking of $\U{P^\dagger}{x}{\alpha}$ on $\ket{\hat{0}}_A\ket{0}_B$ with error $\epsilon$, precision $\sqrt{\frac{\epsilon}{\alpha^2 W}}$\;
     \lIf{\em{Measure $\ket{0}$ in register $B$ at least $N/2$ times}}{
      \Return 0
     }
    }
    \Return 1 \tcp{with low probability the algorithm makes a random guess}
    \caption{}
    \label{alg:bool}
\end{algorithm}


We will show \cref{alg:bool} solves the function decision problem correctly and with the desired complexity. We use the following lemma, which we prove in \cref{app:proofs}, to analyze the calls to  Phase Checking (see \cref{lem:phase_det})
in  Lines 4 and 6 of \cref{alg:bool}, which then allows us to prove \cref{thm:spEvalNew}.

\begin{restatable}{lemma}{phaseEstEarly}\label{lem:phase_est_early}
Let the span program $P$ decide the function $f$ with witness
size $W$, and let $C\geq 2$. Then for Phase Checking with unitary $U(P, \alpha)$ on the
state $\ket{\hat{0}}_A\ket{0}_B$ with error $\epsilon$ and precision $\Theta=\sqrt{\frac{\epsilon}{\alpha^2 W}}$,
\begin{enumerate}
\item If $f(x)=1$, and $\alpha^2\geq Cw(P,x)$, for a large enough constant $C$, then the probability of measuring the $B$ register to be in the state $\ket{0}_B$ is at least $1-1/C$.
\item If $f(x)=0$, the probability of measuring the $B$ register in the state $\ket{0}_B$ is at most $3\epsilon.$
\end{enumerate}
\end{restatable}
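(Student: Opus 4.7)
The plan is to reduce both parts of the lemma to estimates on the projections $P_0(U)$ and $P_\Theta(U)$ where $U = \U{P}{x}{\alpha}$, then invoke \cref{lem:phase_det}. For part~1, I need a lower bound $\|P_0(U)\ket{\hat 0}\|^2 \ge 1-1/C$; for part~2, I need an upper bound $\|P_\Theta(U)\ket{\hat 0}\|^2 \le 2\epsilon$. Both bounds come from constructing explicit witness-based vectors on the extended space $\tilde H$, a standard strategy in span program analysis adapted to the $\hat 0$-augmented operator $\Aa{\alpha}$.

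For the positive case, I would take an optimal positive witness $\ket{w}\in H(x)$ with $A\ket{w}=\tau$ and $\|\ket{w}\|^2=\w{P}{x}$, and form
\[
\ket{w'} \;=\; \alpha\ket{\hat 0} - \ket{w}.
\]
A direct check shows $\Aa{\alpha}\ket{w'} = \ket{\tau}-A\ket{w} = 0$, so $\ket{w'}\in \ker\Aa{\alpha}$, i.e.\ $\Lal{\alpha}\ket{w'}=\ket{w'}$; and since $\ket{w}\in H(x)$ and $\ket{\hat 0}\in \tilde H(x)$, we also have $\PtHx\ket{w'} = \ket{w'}$. Hence $\ket{w'}$ is a $+1$ eigenvector of $U$, contributing to $P_0(U)$. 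Computing the overlap gives
\[
\|P_0(U)\ket{\hat 0}\|^2 \;\ge\; \frac{|\braket{\hat 0}{w'}|^2}{\|\ket{w'}\|^2} \;=\; \frac{\alpha^2}{\alpha^2 + \|\ket{w}\|^2} \;\ge\; \frac{\alpha^2}{\alpha^2 + \w{P}{x}} \;\ge\; \frac{C}{C+1} \;\ge\; 1 - \frac{1}{C},
\]
where the third step uses $\alpha^2 \ge C\w{P}{x}$. Combining with \cref{lem:phase_det} yields the first claim.

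For the negative case, I would use the effective spectral gap lemma (\cref{spec_gap_lemm}) with $\Pi = \PtHx$ and $\Lambda = \Lal{\alpha}$. Let $\omega$ be an optimal negative witness for $x$ with $\omega\tau = 1$, $\omega A\PHx = 0$, and $\|\omega A\|^2 = \w{P}{x} \le W$, and identify $\omega$ with a vector $\ket{\omega}\in V$. Set
\[
\ket{w} \;=\; (\Aa{\alpha})^\dagger\ket{\omega} \;=\; \tfrac{1}{\alpha}\ket{\hat 0} + A^\dagger\ket{\omega}.
\]
Since $\ket{w}\in\mathrm{Im}(\Aa{\alpha})^\dagger = (\ker\Aa{\alpha})^\perp$, we have $\Lal{\alpha}\ket{w}=0$. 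Moreover, $\PtHx\ket{w} = \tfrac{1}{\alpha}\ket{\hat 0} + \PHx A^\dagger\ket{\omega} = \tfrac{1}{\alpha}\ket{\hat 0}$, using that $\PHx A^\dagger\ket{\omega}=0$ is the adjoint of the negative-witness condition. Applying \cref{spec_gap_lemm} gives
\[
\tfrac{1}{\alpha}\|P_\Theta(U)\ket{\hat 0}\| \;=\; \|P_\Theta(U)\PtHx\ket{w}\| \;\le\; \tfrac{\Theta}{2}\|\ket{w}\|,
\]
and then $\|\ket{w}\|^2 = \tfrac{1}{\alpha^2} + \|\omega A\|^2 \le \tfrac{1}{\alpha^2} + W$ together with $\Theta^2 = \epsilon/(\alpha^2 W)$ and $\alpha\ge 1$ (as enforced by \cref{alg:bool}) yields $\|P_\Theta(U)\ket{\hat 0}\|^2 \le \epsilon/2 + \epsilon/(4\alpha^2 W) \le \epsilon$. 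Finally, \cref{lem:phase_det} upgrades this to $\|\Pi_0(U)(\ket{\hat 0}_A\ket{0}_B)\|^2 \le 2\epsilon + \epsilon \le 3\epsilon$.

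Neither step presents a serious obstacle; the only care required is bookkeeping around the extended space $\tilde H$ and the rescaling in $\Aa{\alpha}$, in particular verifying that the factor $1/\alpha$ in the constructed witnesses interacts correctly with the witness-size bounds so that the dependence on $\alpha$ cancels against the choice of precision $\Theta$. A potential subtlety is that the bound in part~2 implicitly relies on $\alpha \gtrsim 1/\sqrt{W}$, which is satisfied by the iteration range in \cref{alg:bool} since $W\ge 1$ for normalized span programs; I would note this assumption explicitly where the $\|\ket{w}\|^2$ bound is used.
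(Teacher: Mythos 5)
Your proposal is correct and follows essentially the same route as the paper: part~1 uses the same $+1$ eigenvector $\alpha\ket{\hat 0}-\ket{w}$ and overlap bound, and part~2 applies the effective spectral gap lemma to the negative-witness vector $(\Aa{\alpha})^\dagger\omega^\dagger$ (the paper's $\ket{v}$ up to an irrelevant factor of $\alpha$), with the same use of \cref{lem:phase_det} at the end. Your explicit remark that the part-2 bound needs $\alpha^2 W\gtrsim 1$ (guaranteed by normalization and the range of $\alpha$ in \cref{alg:bool}) is a point the paper handles implicitly in the step $1+\alpha^2 w(P,x)\leq 2\alpha^2 W$, so no gap.
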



Note that if $f(x)=1$ and $C\w{P}{x} > \alpha^2$, \cref{lem:phase_est_early} makes no claims about
the output. However, since our algorithm can handle false negatives, as
discussed previously, this is acceptable. Without more information about the
phase gap of the unitary we are running Phase Checking on, it seems difficult
to obtain information about this regime.

To prove \cref{lem:phase_est_early}, we use an analysis that mirrors the
Boolean function decision algorithm of Belovs and Reichardt \cite[Section
5.2]{belovsSpanProgramsQuantum2012} and Cade et al. \cite[Section
C.2]{cadeTimeSpaceEfficient2018} and the dual adversary algorithm of Reichardt
\cite[Algorithm 1]{reichardtReflectionsQuantumQuery2011}. Our approach differs
from these previous algorithms in the addition of a parameter that controls
the precision of our phase estimation; we do not always run phase estimation
with a precision that is as high as those in previous works, which is what
causes our false negatives. This general algorithmic approach has
not (to the best of our knowledge\footnote{Jeffery and Ito
\cite{itoApproximateSpanPrograms2019}  also design a function decision
algorithm for non-Boolean span programs, but it has a few differences from our
approach and from that of Refs.
\cite{belovsSpanProgramsQuantum2012,cadeTimeSpaceEfficient2018}; for example,
the initial state of Jeffery and Ito's algorithm might require significant
time to prepare,  while our initial state can be prepared in $O(1)$ time.})
been applied to the the non-Boolean span program of \cref{def:SP}, so while
not surprising that it works in this setting, our analysis in \cref{app:proofs} may be
of independent interest for other applications.

\begin{proof}[Proof of \cref{thm:spEvalNew}]
We analyze \cref{alg:bool}.

Consider the case that $f(x)=1$. Let $T=\left\lceil\log\sqrt{3
\w{P}{x}}\right\rceil.$ We analyze the probability that the algorithm outputs
$1$ within the first $T$ iterations of the \texttt{for} loop. This is the
probability that the algorithm doesn't output $0$ in any of the first $T-1$
rounds (not outputting a $0$ includes the events of outputting a $1$ or
continuing to the next round), times the probability it outputs at $1$ in
the $T$th round. Note that because $\w{P}{x}\leq W$, and the \texttt{for} loop
can repeat up to $\left\lceil\log\sqrt{3 W}\right\rceil$ times, it is possible
to have $T$ iterations.

At the $T$th round,
\begin{equation}
 \alpha=2^{\left\lceil\log\sqrt{3
\w{P}{x}}\right\rceil}\geq \sqrt{3 \w{P}{x}}.
 \end{equation} 
 Since $\alpha^2\geq 3
\w{P}{x}$, by \cref{lem:phase_est_early}, the probability of measuring  $\ket{0}$
at a single repetition of Phase Checking is at least $2/3,$ so the probability 
of seeing at least $N/2$ results with outcome $\ket{0}$ and outputting $1$ is at least
\begin{equation}
 1-2^{-2\log e N/9}>1-2^{-2N/9}
 \end{equation} using Hoeffding's inequality \cite{Hoeff63} for the binomial distribution.

The probability of not outputting a $0$ at any previous round depends on Phase
Checking with $\U{P^\dagger}{x}{\alpha}$. From \cref{lem:SP_duals},
$P^\dagger$ decides the function $\neg f$, and since $f(x)=1$, we have $\neg
f(x)=0$. By \cref{lem:phase_est_early}, each time we run Phase Checking of
$\U{P^\dagger}{x}{\alpha}$, there is at most a $3\epsilon=1/3$ probability of
measuring $\ket{0}$. Thus the probability of seeing at least $N/2$ results with outcome $\ket{0}$ and outputting $0$ is at most $2^{-2N/9}$. Thus, even if we assume a worst case situation where we never
output a $1$ in the first $T-1$ rounds, the probability that we do not output
a $0$ over the first $T-1$ rounds is at least
\begin{equation}
\left(1-2^{-2N/9}\right)^{T-1}>1-T2^{-2N/9}.
\end{equation}
Thus our total success probability is at least
\begin{equation}
\left(1-T2^{-2N/9})\right)\left(1-2^{-2N/9}\right)=1-O(T2^{-2N/9})=1-O(\delta),
\end{equation}
where we've used that $T\leq \left\lceil\log\sqrt{3W}\right\rceil$ and $N=9\log\left(\left\lceil\log\sqrt{3W}\right\rceil/\delta\right)/2$.

For the case that $f(x)=0$, we have $\neg f(x)=1$, so since $P^\dagger$
decides $\neg f$ (by \cref{lem:SP_duals}), we can use the same analysis as in
the case of $f(x)=1$, with $\w{P}{x}$ replaced by $\w{P^\dagger}{x}$. By
\cref{lem:SP_duals}, we have $\w{P^\dagger}{x}\leq \w{P}{x}$, so the same analysis of the success probability will apply.

Now to analyze the query complexity. We've argued that with probability
$1-O(\delta)$, the algorithm terminates within $T$ rounds. In the $i$th round, by
\cref{lem:phase_det}, the number of queries required to run a single repetition of Phase Checking is
$O\left(2^{i}\sqrt{W}\right).$ Over the $N$ repetitions, we have that the $i$th round uses $O\left(2^{i}\sqrt{W}\log(\log (W)/\delta)\right)$ queries.
Summing over the rounds from $i=0$ to
$T=\left\lceil\log\sqrt{3\w{P}{x}}\right\rceil$ and using the geometric series formula,
we find that the total number of queries is
\begin{equation}
O\left(\sqrt{\w{P}{x}W}\log(\log(W)/\delta)\right)=\tilde{O}\left(\sqrt{\w{P}{x} W}\log(1/\delta)\right).
\end{equation}
\end{proof}

\subsection{Application to st-connectivity}

As an example, we apply \cref{thm:spEvalNew} to the problem of
$st$-connectivity on an $n$ vertex graph. There is a span program
$P$ such that (after scaling), for inputs $x$ where there is a path from $s$
to $t$, $w(P,x)=R_{s,t}(x)\sqrt{n}$ where $R_{s,t}(x)$ is the effective
resistance from $s$ to $t$ on the subgraph induced by $x$,
 and for inputs $x$ where there is not a
path from $s$ to $t$, $w(P,x)=C_{s,t}(x)/\sqrt{n}$, where $C_{s,t}(x)$ is the
effective capacitance between $s$ and $t$
\cite{belovsSpanProgramsQuantum2012,jarretQuantumAlgorithmsConnectivity2018}. In an $n$-vertex graph, the effective resistance is at
most $n$, and the effective capacitance is at most $n^{2}$. Thus if we desire a bounded error algorithm, by \cref{thm:spEvalNew}, we can determine whether or not there is a path with $\tilde{O}(\sqrt{R_{s,t}(x)n^2})$ queries if there there is a path, and
$\tilde{O}(\sqrt{C_{s,t}(x)n})$ queries if there is not a path. The 
effective resistance is at most the shortest path between two vertices,
and the effective capacitance is at most the smallest cut between two
vertices. Thus
our algorithm determines whether or not there is a path from $s$ to
$t$ with $\tilde{O}(\sqrt{k}n)$ queries if there is a path of length $k$, and if there is no path,
the algorithm uses $\tilde{O}(\sqrt{nc})$ queries, where $c$ is the size of
the smallest cut between $s$ and $t$. Importantly, one does not need to know
bounds on $k$ or $c$ ahead of time to achieve this query complexity.

The analysis of the other examples listed in \cref{sec:intro} is similar.


\section{State Conversion}\label{sec:StateConv}

In this section, we prove the following result regarding quantum state conversion:
\begin{theorem}\label{thm:stateConvNew}
Let $\sop{P}$ be a (normalized) converting
vector set from $\rho$ to $\sigma$ with witness size $W=\min\left\{\max_{\chi\in X}w_+(\sop P,\chi),\max_{\chi\in X}w_-(\sop P,\chi)\}\right\}$. Then there is quantum algorithm that for any $x\in X$ with probability $1-p$ converts $\ket{\rho_x}$ to $\ket{\sigma_x}$ with error $\varepsilon$ and uses 
\\$\tilde{O}\left(\sqrt{\min\{w_+(\sop P,x),w_-(\sop P,x)\}W}/\left(\varepsilon^5p\right)\right)$ queries to the oracle $O_x$.
\end{theorem}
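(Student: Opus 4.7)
\textbf{Plan for proof of Theorem~\ref{thm:stateConvNew}.}
The plan is to mirror the strategy of Theorem~\ref{thm:spEvalNew}, adapted from function evaluation to state conversion: build a family of conversion subroutines parameterized by a scale $\alpha$, sweep $\alpha$ through exponentially growing values, and use Amplitude Estimation to decide when to halt. The parameter $\alpha$ plays the role it did for span programs, acting as a guess for a witness size. Because the complement converting vector set $\sop{P}^C$ of Lemma~\ref{lem:complement} swaps the roles of $w_+$ and $w_-$, running the procedure with both $\sop{P}$ and $\sop{P}^C$ at each value of $\alpha$ ensures the algorithm halts once $\alpha^2$ reaches $\Theta(\min\{w_+(\sop{P},x),w_-(\sop{P},x)\})$, which is what produces the $\min$ in the bound.

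First I would construct an $\alpha$-scaled subroutine $\sop{A}_\alpha(\sop{P})$ that, using the framework of Lee et al.\ \cite{leeQuantumQueryComplexity2011} together with Phase Reflection (Lemma~\ref{lem:phase_refl}), outputs a state of the form $\gamma_\alpha \ket{\sigma_x'}\ket{0}_F + \gamma_\alpha' \ket{\mathrm{garbage}}\ket{1}_F$ with a flag register $F$, making $\tilde{O}(\sqrt{\alpha W}/\varepsilon^2)$ queries. The construction adjoins an auxiliary label to the underlying vectors and rescales the contribution that distinguishes $\ket{\rho_x}$ from $\ket{\sigma_x}$ by $1/\alpha$, directly analogous to $\Aa{\alpha} = \tfrac{1}{\alpha}\ketbra{\tau}{\hat 0}+A$ from Section~\ref{sec:func}. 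The guarantee I would aim for, proved via an analog of Lemma~\ref{lem:phase_est_early}, is that whenever $\alpha^2 \geq C\,w_+(\sop{P},x)$, we have $|\gamma_\alpha|^2 \geq 1 - O(\varepsilon^2)$ and $\ket{\sigma_x'}$ is $\varepsilon$-close to $\ket{\sigma_x}$; this uses the effective spectral gap lemma (Lemma~\ref{spec_gap_lemm}) with precision $\Theta$ now set relative to $\alpha$ rather than to the worst-case $W$, so subroutines called with small $\alpha$ silently fail by depositing weight on $\ket{1}_F$.

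Second, I would package these subroutines into the full algorithm. For $i=0,1,\ldots,\lceil\log\sqrt{W}\rceil$, set $\alpha=2^i$, run $\sop{A}_\alpha(\sop{P})$ and $\sop{A}_\alpha(\sop{P}^C)$, and apply Amplitude Estimation (Lemma~\ref{lem:ampEst}) to estimate $\Pr[F=\ket{0}]$ to additive error $\Theta(\varepsilon^2)$ with failure probability $p/\log W$. If the estimate exceeds a threshold $1-\Theta(\varepsilon^2)$, measure the flag, output the resulting state on the conversion registers, and halt; otherwise move to the next iteration. The final iteration $\alpha^2 = W$ is guaranteed to succeed by Corollary~\ref{cor:stateConv}. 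Summing the subroutine costs geometrically gives $\tilde{O}(\sqrt{\min\{w_+(\sop{P},x),w_-(\sop{P},x)\}\,W}/\varepsilon^2)$; the Amplitude Estimation overhead contributes a further factor of $\tilde{O}(1/(\varepsilon^2 p))$ from distinguishing success probabilities and one more factor of $1/\varepsilon$ from converting the flag-probability guarantee into an $\ell_2$ guarantee on the conditional state uniformly in $\alpha$, yielding the claimed $\tilde{O}(\sqrt{\min\{w_+,w_-\}W}/(\varepsilon^5 p))$.

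The main obstacle is engineering the flagged subroutine so that both (a) its $\ket{0}_F$ probability is provably $1-O(\varepsilon^2)$ once $\alpha^2\gtrsim w_+(\sop{P},x)$, and (b) the state conditioned on $F=\ket{0}_F$ is genuinely $\varepsilon$-close to $\ket{\sigma_x}$. The standard state-conversion algorithm does not naturally produce such a flag, so some restructuring of its intermediate states in the spirit of the $\ket{\hat 0}$ ancilla used for span programs will be required, and tracking how the subroutine-error, the Amplitude Estimation additive error, and the conditional-state deviation compose without blowing up the $\varepsilon$ dependence is the delicate part of the analysis.
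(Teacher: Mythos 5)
Your outer skeleton coincides with the paper's: exponentially growing scale $\alpha$, running both $\sop P$ and its complement $\sop P^C$ (\cref{lem:complement}) so that halting at the smaller of the two witnesses yields the $\min\{w_+(\sop P,x),w_-(\sop P,x)\}$ factor, amplitude estimation with failure probability $p/\log W$ as the halting test, and a final conversion step. But the load-bearing ingredient is missing. You posit a subroutine whose output carries a flag $F$ with $\Pr[F=0]\geq 1-O(\varepsilon^2)$ once $\alpha$ is large enough \emph{and} whose state conditioned on $F=0$ is $\varepsilon$-close to $\ket{\sigma_x}$, and you yourself call its construction the main obstacle. That obstacle is precisely the problem the theorem is about: $\ket{\sigma_x}$ is unknown and unverifiable, so a flag attached to the output does not by itself certify the conditional state, and no restructuring "in the spirit of the $\ket{\hat 0}$ ancilla" is supplied that makes it do so. The paper's resolution is different in kind: it never flags the output. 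Its Probing Stage runs amplitude estimation on the Phase-Checking circuit $D(\Up{\sop P'}{x}{\alpha}{\hat{\varepsilon}})$ applied to the fixed state $(\ket{0}\ket{\rho_x})_A\ket{0}_B$ and halts when the estimated probability of $\ket{0}_B$ is within $O(\hat{\varepsilon})$ of $1/2$, not of $1$. The reason a reading near $1/2$ is a valid certificate is \cref{claim:spec_gap_applied}: for \emph{every} $\alpha$ the $\ket{t_{x-}}$ half of $\ket{0}\ket{\rho_x}$ has only $O(\hat{\varepsilon})$ weight in the small-phase space, so probability near $1/2$ forces the $\ket{t_{x+}}$ half to lie almost entirely inside it (\cref{claim:ampEstbound}); \cref{claim:stopping} guarantees the threshold is crossed once $\alpha\geq w_+(\sop P',x)$ for one of $\sop P'\in\{\sop P,\sop P^C\}$; and \cref{lem:finalStateAnalysis} converts the certificate into $\|R(\Up{\sop P'}{x}{\alpha}{\hat{\varepsilon}})(\ket{0}\ket{\rho_x})_A\ket{0}_B-(\ket{1}\ket{\sigma_x})_A\ket{0}_B\|=O(\sqrt{\hat{\varepsilon}})$ for the single Phase Reflection performed at the end. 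Without either this mechanism or a concrete construction of your flagged subroutine with both properties (a) and (b), the proposal has a genuine gap at its core.

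There is also an internal inconsistency in your scaling. In the paper's state-conversion construction the scale enters through $\ket{\psi_{x,\alpha,\hat{\varepsilon}}}=\sqrt{\hat{\varepsilon}/\alpha}\,\ket{t_{x-}}-\sum_j\ket{j}\ket{\mu_{x_j}}\ket{u_{xj}}$, and the positive certificate state has norm $1+O\bigl(\hat{\varepsilon}\,w_+(\sop P,x)/\alpha\bigr)$, so the halting condition is $\alpha\geq w_+(\sop P,x)$ — linear in the witness size, unlike the span-program case where $\alpha^2\gtrsim C\,w(P,x)$ — the loop runs to $\lceil\log W\rceil$, and each Phase Checking costs $O\bigl(\sqrt{\alpha W}\,\hat{\varepsilon}^{-3/2}\bigr)$ queries, which with $\hat{\varepsilon}=\varepsilon^2/9$ and the $O(\hat{\varepsilon}^{-1}p^{-1}\log W)$ amplitude-estimation repetitions gives the $\varepsilon^{-5}p^{-1}$ factor and a geometric sum of $\sqrt{\min\{w_+(\sop P,x),w_-(\sop P,x)\}\,W}$. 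Your hybrid — halting when $\alpha^2\geq C\,w_+(\sop P,x)$, per-round cost $\tilde{O}(\sqrt{\alpha W}/\varepsilon^2)$, loop to $\lceil\log\sqrt{W}\rceil$ — does not cohere: with those numbers the sum is $\tilde{O}\bigl(w_+(\sop P,x)^{1/4}\sqrt{W}\bigr)$, not the claimed $\tilde{O}\bigl(\sqrt{w_+(\sop P,x)W}\bigr)$, so the $\alpha$-dependence of the (unconstructed) subroutine must be fixed consistently before the complexity claim can stand.
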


Comparing \cref{thm:stateConvNew} with \cref{cor:stateConv}, and for a moment ignoring the scaling with $\epsilon$ and $p$, 
 we see that in the worst case, when we have an input $x$ where 
$\min\{w_+(\sop P,x),w_-(\sop P,x)\}=
\min\left\{\max_{\chi\in X}w_+(\sop P,\chi),\max_{\chi\in X}w_-(\sop P,\chi)\}\right\}$
 the performance of our algorithm is the same, up to log factors,
as the standard state conversion algorithm. However, when we have an instance $x$ with a smaller value of $w_\pm(\sop P,x)$,
then our algorithm has improved query complexity, without having to know anything about the witness size of our input ahead of time.

Our
algorithm has worse scaling in $\varepsilon$ than \cref{cor:stateConv}, but when $\varepsilon$ is a constant this is a non-issue. We also note that
while the $\tilde{O}$ in \cref{cor:stateConv} hides logarithmic factors in
$\varepsilon$, in \cref{thm:stateConvNew} it hides logarithmic factors in both $\varepsilon$ and
$W$. We believe that it should be possible to improve the the $1/p$ factor in the complexity to $\log(1/p)$.

The problem of state conversion is a more general problem than function evaluation, and it can be used to solve the function evaluation problem. However, because of the worse scaling with $\varepsilon$ in \cref{thm:stateConvNew}, we considered function evaluation separately (see \cref{sec:func}).

There might be some inputs $x\in X$ for which the problem of
converting from $\ket{\rho_x}$ to $\ket{\sigma_x}$ is less difficult, and thus requires fewer queries.
However if we run the standard state conversion algorithm for less than the
worst-case queries, we can not tell whether the computation has
completed, since the output is
a state $\ket{\sigma_x}$ where $x$ is unknown, and any measurement will collapse the state. This contrasts with function evaluation, where at least there was a measurement at the end of the computation.

Thus instead of repeatedly running the standard state conversion algorithm for
increasingly longer times, as with function decision, we instead use an initial probing
protocol that we repeatedly run for increasingly longer times. This probing
subroutine helps us determine how long we need
to run the main state conversion algorithm for to guarantee success.

In the following we use most of the notation conventions of Ref.
\cite{leeQuantumQueryComplexity2011} for clarity. For $X\subseteq [q]^n,$ let
$\sop{P}=\left(\{\ket{u_{xj}}\},\{\ket{v_{yj}}\}\right)$ be a converting
vector set from $\rho$ to $\sigma$, where for all $x\in X$, the states $\ket{\rho_x}$ and $\ket{\sigma_x}$ are in the Hilbert space $\sop H$. For all $x\in X$, define 
$\ket{t_{x\pm}},
\ket{\psi_{x,\alpha,\hat{\varepsilon}}}\in(\mathbb{C}^2\otimes \sop H)\oplus(\mathbb{C}^n\otimes \mathbb{C}^q\otimes \mathbb{C}^m)$ as
\begin{align}
\ket{t_{x\pm}}=\frac{1}{\sqrt{2}}\left(\ket{0}\ket{\rho_x}\pm\ket{1}\ket{\sigma_x}\right), \quad \textrm{ and }\quad
\ket{\psi_{x,\alpha,\hat{\varepsilon}}}=\sqrt{\frac{\hat{\varepsilon}}{\alpha}}\ket{t_{x-}}-\sum_{j\in[n]}\ket{j}\ket{\mu_{x_j}}\ket{u_{xj}},
\end{align}
where $\ket{\mu_{x_j}}$ is from \cref{eq:mu_nu}, and $\alpha$ is a parameter
 analogous to the parameter $\alpha$ in \cref{eq:alphaIntro}. We will choose 
$\hat{\varepsilon}$ to achieve a desired
accuracy of $\varepsilon$ in our state conversion procedure. 
Set $\Lalp$ to equal the projection onto the orthogonal complement of the span
of the vectors $\{\ket{\psi_{x,\alpha,\hat{\varepsilon}}}\}_{x\in X}$, and set
$\PHxp=I-\sum_{j\in[n]}\proj{j}\otimes \proj{\mu_{x_j}}\otimes
I_{\mathbb{C}^n}$. Finally, we set $\Up{\sop
P}{x}{\alpha}{\hat{\varepsilon}}=(2\PHxp-I)(2\Lalp-I)$. The reflection $2\PHxp-I$ can be
implemented with two applications of $O_x$
\cite{leeQuantumQueryComplexity2011}, and the reflection $(2\Lalp-I)$ is independent of $x$ and so requires no queries.


\begin{algorithm}[h!]
    \DontPrintSemicolon
    \SetKwInOut{Input}{Input}
    \SetKwInOut{Output}{Output}
    \Input{Converting vector set $\sop P$ from $\rho$ to $\sigma$ with witness size $W$, failure probability $p$, error $\varepsilon$, oracle $O_x$, initial state $\ket{\rho_x}$}
    \Output{$\ket{\tilde{\sigma}_x}$ such that $\|\ket{\tilde{\sigma}_x}-\ket{1}\ket{\sigma_x}\ket{0}\|\leq \varepsilon$}
    \tcc{Probing Stage}
    $\hat{\varepsilon}\gets \varepsilon^2/9$\; \label{algline:one}
    \For{i=0 \KwTo $\lceil\log W\rceil$}
    {
      $\alpha\gets2^i$\;
      \For{$\sop P'\in\{\sop P, \sop P^C\}$}{
      $\mathcal{A}\gets D(\Up{\sop P'}{x}{\alpha}{\hat{\varepsilon}})$ (\cref{lem:phase_det}) to precision $\hat{\varepsilon}^{3/2}/\sqrt{\alpha W}$ and accuracy $\hat{\varepsilon}^2$ \;              
        $\hat{a}\gets$ Amplitude Estimation (\cref{lem:ampEst}) of probability of outcome $\ket{0}_B$ in register $B$ when  $\mathcal{A}$ acts on $(\ket{0}\ket{\rho_x})_A\ket{0}_B$ to additive error $\hat{\varepsilon}/4$ with probability of failure $\frac{p}{\log W}$ \;
        \lIf{$\hat{a}-1/2>-\frac{11}{4}\hat{\varepsilon}$}
        {
          Continue to State Conversion Stage
        }
      }

    }
    \tcc{State Conversion Stage}

      Apply $R(\Up{\sop P'}{x}{\alpha}{\hat{\varepsilon}})$ (\cref{lem:phase_refl}) with precision $\hat{\varepsilon}^{3/2}/\sqrt{\alpha W}$ and accuracy $\hat{\varepsilon}^2$ to $(\ket{0}\ket{\rho_x})_A\ket{0}_B$  and output the result\;
    
    \caption{}
    \label{alg:StateConv}
\end{algorithm}{}



The idea of our approach, given in \cref{alg:StateConv}, is that when we apply Phase Reflection of $\Up{\sop
 P'}{x}{\alpha}{\hat{\varepsilon}}$ in Line 8 to
 $(\ket{0}\ket{\rho_x})_A\ket{0}_B=\frac{1}{\sqrt{2}}(\ket{t_{x+}}_A\ket{0}_B+\ket{t_{x-}}_A\ket{0}_B)$,
 we want $\ket{t_{x+}}_A\ket{0}_B$ to pick up a $+1$ phase, and
 $\ket{t_{x-}}_A\ket{0}_B$ to pick up a $-1$ phase. If this happened perfectly,
  we would have the desired state $(\ket{1}\ket{\sigma_x})_A\ket{0}_B$.
 In \cref{claim:stopping,claim:ampEstbound,claim:spec_gap_applied,lem:finalStateAnalysis} we derive results that show that in the State Conversion stage of \cref{alg:StateConv},
 $\ket{t_{x+}}_A\ket{0}_B$ will mostly pick up a $+1$ phase, and
 $\ket{t_{x-}}_A\ket{0}_B$ will mostly pick up a $-1$ phase, resulting in a state
 close to $(\ket{1}\ket{\sigma_x})_A\ket{0}_B$.

In
 \cref{claim:spec_gap_applied,claim:ampEstbound,claim:stopping,lem:finalStateAnalysis}, let
 $\sop{P}=\left(\{\ket{u_{xj}}\},\{\ket{v_{yj}}\}\right)$ be a converting
 vector set from $\rho$ to $\sigma$ with $W=W(\sop P)$. When we write $\Pi_0(\Up{\sop
 P}{x}{\alpha}{\hat{\varepsilon}})$, $\overline{\Pi}_0(\Up{\sop
 P}{x}{\alpha}{\hat{\varepsilon}})$ or $R(\Up{\sop
 P}{x}{\alpha}{\hat{\varepsilon}})$, it refers to Phase 
 Checking/Reflection on $\Up{\sop
 P}{x}{\alpha}{\hat{\varepsilon}}$ with precision $\hat{\varepsilon}^{3/2}/\sqrt{\alpha W}$
 and accuracy $\hat{\varepsilon}^2$. 

\begin{restatable}{lemma}{specBound}\label{claim:spec_gap_applied}
If $\Theta = \hat{\varepsilon}^{3/2}/\sqrt{\alpha W}$, then $\| P_\Theta(\Up{\sop
 P}{x}{\alpha}{\hat{\varepsilon}})\ket{t_{x-}} \|^2 \leq \frac{\hat{\varepsilon}^2}{2}$.
\end{restatable}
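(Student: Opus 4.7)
The plan is to invoke the effective spectral gap lemma (\cref{spec_gap_lemm}) with $\Pi = \PHxp$, $\Lambda = \Lalp$, and $U = \Up{\sop P}{x}{\alpha}{\hat{\varepsilon}} = (2\PHxp - I)(2\Lalp - I)$. The key is to find a witness vector $\ket{w}$ satisfying $\Lalp \ket{w} = 0$ whose image under $\PHxp$ is proportional to $\ket{t_{x-}}$. The natural candidate is $\ket{w} = \ket{\psi_{x,\alpha,\hat{\varepsilon}}}$: by definition of $\Lalp$ as the projector onto the orthogonal complement of $\mathrm{span}\{\ket{\psi_{y,\alpha,\hat{\varepsilon}}}\}_{y\in X}$, we have $\Lalp \ket{\psi_{x,\alpha,\hat{\varepsilon}}} = 0$ immediately.

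Next I would compute $\PHxp \ket{\psi_{x,\alpha,\hat{\varepsilon}}}$. The term $\sqrt{\hat{\varepsilon}/\alpha}\,\ket{t_{x-}}$ lies in the summand $\mathbb{C}^2\otimes\sop H$, which is orthogonal to the image of every $\proj{j}\otimes\proj{\mu_{x_j}}\otimes I$, so $\PHxp$ acts as the identity on it. Conversely, each term $\ket{j}\ket{\mu_{x_j}}\ket{u_{xj}}$ is an eigenvector with eigenvalue $1$ of $\proj{j}\otimes\proj{\mu_{x_j}}\otimes I$, hence lies entirely in the kernel of $\PHxp$. Therefore $\PHxp \ket{\psi_{x,\alpha,\hat{\varepsilon}}} = \sqrt{\hat{\varepsilon}/\alpha}\,\ket{t_{x-}}$. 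Using the direct-sum decomposition and $\|\ket{t_{x-}}\|^2 = 1$, the norm bound is
\begin{equation*}
\|\ket{\psi_{x,\alpha,\hat{\varepsilon}}}\|^2 = \tfrac{\hat{\varepsilon}}{\alpha} + \sum_j \|\ket{u_{xj}}\|^2 = \tfrac{\hat{\varepsilon}}{\alpha} + w_+(\sop P, x) \leq \tfrac{\hat{\varepsilon}}{\alpha} + W.
\end{equation*}

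Finally, applying \cref{spec_gap_lemm} and rearranging gives
\begin{equation*}
\|P_\Theta(\Up{\sop P}{x}{\alpha}{\hat{\varepsilon}})\ket{t_{x-}}\|^2 = \tfrac{\alpha}{\hat{\varepsilon}}\,\|P_\Theta(U)\,\PHxp\ket{\psi_{x,\alpha,\hat{\varepsilon}}}\|^2 \leq \tfrac{\alpha}{\hat{\varepsilon}}\cdot\tfrac{\Theta^2}{4}\left(\tfrac{\hat{\varepsilon}}{\alpha}+W\right).
\end{equation*}
Substituting $\Theta^2 = \hat{\varepsilon}^3/(\alpha W)$ collapses this to $\hat{\varepsilon}^2/4 + \hat{\varepsilon}^3/(4\alpha W)$, which is at most $\hat{\varepsilon}^2/2$ provided $\hat{\varepsilon}/(\alpha W) \leq 1$. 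Since $\hat{\varepsilon} = \varepsilon^2/9 < 1$ and $\alpha, W \geq 1$ in the regime of interest, this inequality holds and yields the claim. The main obstacle is really just the careful bookkeeping in identifying that $\PHxp$ annihilates the $\sum_j$ term exactly while preserving the $\ket{t_{x-}}$ piece; once that structure is pinned down, the spectral gap lemma does all the work and the constants fall out from the choice of $\Theta$.
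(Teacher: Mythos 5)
Your proposal is correct and follows essentially the same route as the paper: both invoke the effective spectral gap lemma with the witness $\ket{\psi_{x,\alpha,\hat{\varepsilon}}}$ (the paper merely rescales it by $\sqrt{\alpha/\hat{\varepsilon}}$ so that $\PHxp\ket{w}=\ket{t_{x-}}$ exactly), bound its norm by $\hat{\varepsilon}/\alpha + W$, and substitute $\Theta^2=\hat{\varepsilon}^3/(\alpha W)$, using $\hat{\varepsilon}\leq \alpha W$ to absorb the small term into $\hat{\varepsilon}^2/2$. The bookkeeping showing $\PHxp$ preserves the $\ket{t_{x-}}$ piece and annihilates the $\sum_j\ket{j}\ket{\mu_{x_j}}\ket{u_{xj}}$ piece matches the paper's (implicit) computation.
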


\vspace{-.2cm}

\begin{restatable}{lemma}{half}\label{claim:stopping}
If $\alpha\geq w_+(\sop P,x)$, then
$\|\Pi_0(\Up{\sop
 P}{x}{\alpha}{\hat{\varepsilon}})({\ket{0}\ket{\rho_x}})_A\ket{0}_B\|^2\geq \frac{1}{2}\left(1-5\hat{\varepsilon}\right)$.
\end{restatable}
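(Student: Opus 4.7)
The plan is to decompose $\ket 0\ket{\rho_x} = \tfrac{1}{\sqrt 2}(\ket{t_{x+}} + \ket{t_{x-}})$, note that the two summands are orthogonal, and analyze $\Pi_0(\Up{\sop P}{x}{\alpha}{\hat\varepsilon})$ separately on each. The $\ket{t_{x-}}$ piece will be almost annihilated, while the $\ket{t_{x+}}$ piece will be almost preserved; combining the two yields the claimed $\tfrac12(1-5\hat\varepsilon)$ lower bound.

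For $\ket{t_{x-}}$ essentially all of the work is already done: \cref{claim:spec_gap_applied} gives $\|P_\Theta(\Up{\sop P}{x}{\alpha}{\hat\varepsilon})\ket{t_{x-}}\|^2 \leq \hat\varepsilon^2/2$, and the upper bound in \cref{lem:phase_det} (with accuracy $\epsilon=\hat\varepsilon^2$) then gives $\|\Pi_0(\Up{\sop P}{x}{\alpha}{\hat\varepsilon})\ket{t_{x-}}_A\ket 0_B\|^2 \leq \tfrac32\hat\varepsilon^2$. For $\ket{t_{x+}}$ I will produce an explicit witness $\ket{\xi_x}\in\mathrm{Img}(\PHxp)\cap\mathrm{Img}(\Lalp)$; any such vector sits inside the $+1$ eigenspace of the product reflection $\Up{\sop P}{x}{\alpha}{\hat\varepsilon}$, so the lower bound in \cref{lem:phase_det} combined with $\|P_0\ket{t_{x+}}\|^2 \geq |\braket{\xi_x}{t_{x+}}|^2/\|\xi_x\|^2$ lets me transfer control of the witness norm into control of $\|\Pi_0\ket{t_{x+}}_A\ket 0_B\|^2$.

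The witness has the form $\ket{\xi_x} = \ket{t_{x+}} + \ket{\eta_x}$, where $\ket{\eta_x}$ lives in the second summand and is built out of the converting-vector-set's $v$-vectors, via an ansatz $\ket{\eta_x} = \frac{\sqrt{\hat\varepsilon}(q-1)}{q\sqrt\alpha}\sum_j\ket j\ket{\nu_{x_j}}\ket{v_{xj}}$. Because $\braket{\mu_{x_j}}{\nu_{x_j}}=0$, the correction automatically satisfies $\PHxp\ket{\eta_x} = \ket{\eta_x}$, so $\ket{\xi_x}\in\mathrm{Img}(\PHxp)$. The orthogonality requirement $\braket{\psi_{y,\alpha,\hat\varepsilon}}{\xi_x} = 0$ for every $y\in X$ reduces, via $\braket{t_{y-}}{t_{x+}} = \tfrac12(\rho_{yx}-\sigma_{yx})$, the converting-vector-set identity \cref{eq:cvsSum}, and the bi-orthogonality $\braket{\mu_i}{\nu_j} = \tfrac{q}{2(q-1)}(1-\delta_{ij})$, to a cancellation that happens term by term in $j$, pinning down the coefficient above. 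A direct norm computation then yields $\|\ket{\xi_x}\|^2 - 1$ is bounded by a constant times $\hat\varepsilon/\alpha$ times the relevant witness size of $x$ (possibly after invoking the complement transformation of \cref{lem:complement} so that $w_+(\sop P,x)$ is the witness size appearing on the right-hand side). Under $\alpha\geq w_+(\sop P,x)$ this gives $\|\xi_x\|^2\leq 1+\hat\varepsilon$, and since $\braket{t_{x+}}{\xi_x}=1$, we obtain $\|\Pi_0\ket{t_{x+}}_A\ket 0_B\|^2 \geq 1-\hat\varepsilon$.

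Putting the two halves together with $\ket a = \Pi_0\ket{t_{x+}}_A\ket 0_B$, $\ket b = \Pi_0\ket{t_{x-}}_A\ket 0_B$, and Cauchy--Schwarz gives
\begin{equation}
\|\Pi_0(\ket 0\ket{\rho_x})_A\ket 0_B\|^2 = \tfrac12 \|\ket a + \ket b\|^2 \geq \tfrac12\bigl(\|a\|^2 - 2\|a\|\|b\|\bigr) \geq \tfrac12\bigl(1 - \hat\varepsilon - 2\sqrt{3/2}\,\hat\varepsilon\bigr) \geq \tfrac12(1-5\hat\varepsilon),
\end{equation}
using $\|a\|\leq 1$. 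The main obstacle is the witness construction: $\ket{\eta_x}$ must simultaneously sit in the image of $\PHxp$, make $\ket{\xi_x}$ orthogonal to every $\ket{\psi_{y,\alpha,\hat\varepsilon}}$, and have norm squared controlled by $\hat\varepsilon/\alpha$ times the witness size featured in the hypothesis. The $\mu$--$\nu$ bi-orthogonality and the converting-vector-set identity together are exactly what is needed for the $y$-indexed orthogonality conditions to decouple across the coordinates $j$, making a simple product-form correction work.
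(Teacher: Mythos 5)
Your proposal is correct and follows essentially the same route as the paper: the same $\ket{t_{x\pm}}$ decomposition, the same explicit $+1$-eigenvector certificate $\ket{t_{x+}}+\frac{\sqrt{\hat\varepsilon}(q-1)}{q\sqrt{\alpha}}\sum_j\ket{j}\ket{\nu_{x_j}}\ket{v_{xj}}$ made orthogonal to every $\ket{\psi_{y,\alpha,\hat\varepsilon}}$ via \cref{eq:cvsSum} and the $\mu$--$\nu$ bi-orthogonality, and the same use of \cref{claim:spec_gap_applied} with \cref{lem:phase_det} to kill the $\ket{t_{x-}}$ piece. The only differences are cosmetic (you bound $\|\Pi_0\ket{t_{x-}}\ket{0}\|^2$ directly from the first bullet of \cref{lem:phase_det} and combine the pieces by Cauchy--Schwarz rather than the reverse triangle inequality), and your aside about which witness size ($w_+$ vs.\ $w_-$) appears in the certificate norm mirrors a labeling wrinkle already present in the paper's own proof, which is harmless in the application since both $\sop P$ and $\sop P^C$ are probed.
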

\vspace{-.2cm}
\begin{restatable}{lemma}{wrongPhase}\label{claim:ampEstbound}
If
 $\|\Pi_0(\Up{\sop
 P}{x}{\alpha}{\hat{\varepsilon}})\ket{0}\ket{\rho_x}\ket{0}\|^2\geq1/2-3\hat{\varepsilon}$, then
 $\|\overline{\Pi}_0(\Up{\sop
 P}{x}{\alpha}{\hat{\varepsilon}})\ket{t_{x+}}\ket{0}\|^2\leq10\hat{\varepsilon}$.
\end{restatable}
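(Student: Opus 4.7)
The plan is to exploit the decomposition $\ket{0}\ket{\rho_x} = \frac{1}{\sqrt{2}}(\ket{t_{x+}} + \ket{t_{x-}})$, which follows directly from the definitions of $\ket{t_{x\pm}}$. Inserting this into the hypothesis and tensoring with $\ket{0}_B$, we get
\[
  \tfrac{1}{\sqrt{2}}\bigl\|\Pi_0 \ket{t_{x+}}\ket{0}_B + \Pi_0 \ket{t_{x-}}\ket{0}_B\bigr\|^2 \;=\; \|\Pi_0 \ket{0}\ket{\rho_x}\ket{0}_B\|^2 \;\geq\; \tfrac{1}{2}-3\hat\varepsilon,
\]
so by a reverse triangle inequality $\|\Pi_0 \ket{t_{x+}}\ket{0}_B\| \geq \sqrt{1-6\hat\varepsilon} - \|\Pi_0 \ket{t_{x-}}\ket{0}_B\|$. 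The strategy is thus reduced to showing that $\|\Pi_0 \ket{t_{x-}}\ket{0}_B\|$ is $O(\hat\varepsilon)$: once we have this, the bound on $\|\overline{\Pi}_0\ket{t_{x+}}\ket{0}_B\|^2$ will follow from the fact that $\ket{t_{x+}}\ket{0}_B$ is a unit vector, so $\|\overline{\Pi}_0\ket{t_{x+}}\ket{0}_B\|^2 = 1 - \|\Pi_0\ket{t_{x+}}\ket{0}_B\|^2$.

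To control $\|\Pi_0 \ket{t_{x-}}\ket{0}_B\|$, I split $\ket{t_{x-}}$ into its low-phase and high-phase parts with respect to $\Up{\sop P}{x}{\alpha}{\hat\varepsilon}$, writing
\[
  \ket{t_{x-}}\ket{0}_B \;=\; (P_\Theta \ket{t_{x-}})_A\ket{0}_B + (\overline{P}_\Theta \ket{t_{x-}})_A\ket{0}_B
\]
with $\Theta=\hat\varepsilon^{3/2}/\sqrt{\alpha W}$. For the first piece, $\|\Pi_0(P_\Theta\ket{t_{x-}})\ket{0}_B\| \leq \|P_\Theta\ket{t_{x-}}\| \leq \hat\varepsilon/\sqrt{2}$ by \cref{claim:spec_gap_applied}. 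For the second piece, the accuracy bullet of \cref{lem:phase_det} (used with accuracy $\hat\varepsilon^2$) gives $\|\Pi_0(\overline{P}_\Theta\ket{t_{x-}})\ket{0}_B\| \leq \hat\varepsilon$. Combining via the triangle inequality yields $\|\Pi_0 \ket{t_{x-}}\ket{0}_B\| \leq (1+1/\sqrt{2})\hat\varepsilon$.

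Substituting back, $\|\Pi_0\ket{t_{x+}}\ket{0}_B\| \geq \sqrt{1-6\hat\varepsilon} - (1+1/\sqrt{2})\hat\varepsilon$. Squaring and using $\sqrt{1-6\hat\varepsilon}\geq 1-3\hat\varepsilon$ (valid since we may assume $\hat\varepsilon$ small, else the conclusion is trivial as the norm is at most $1$), I bound $\|\Pi_0\ket{t_{x+}}\ket{0}_B\|^2 \geq 1 - 10\hat\varepsilon$ after collecting the linear-in-$\hat\varepsilon$ terms and dropping nonnegative $O(\hat\varepsilon^2)$ corrections. Unit-norm of $\ket{t_{x+}}\ket{0}_B$ then gives $\|\overline{\Pi}_0\ket{t_{x+}}\ket{0}_B\|^2 \leq 10\hat\varepsilon$, as claimed.

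The main technical nuisance, rather than a deep obstacle, is to bookkeep the two sources of error in bounding $\|\Pi_0\ket{t_{x-}}\ket{0}_B\|$ — the low-phase leakage governed by the effective spectral gap and the intrinsic accuracy of Phase Checking — and then to verify that the resulting linear-in-$\hat\varepsilon$ coefficient after squaring is indeed at most $10$. All other steps are algebraic manipulations on the $\ket{t_{x\pm}}$ decomposition, so the proof is essentially a careful triangle-inequality argument powered by \cref{claim:spec_gap_applied} and \cref{lem:phase_det}.
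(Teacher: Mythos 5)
Your proposal is correct in substance and follows essentially the same route as the paper: decompose $\ket{0}\ket{\rho_x}$ into $\ket{t_{x\pm}}$, bound $\|\Pi_0\ket{t_{x-}}\ket{0}\|$ by inserting $P_\Theta+\overline{P}_\Theta$ and invoking \cref{claim:spec_gap_applied} together with the accuracy bullet of \cref{lem:phase_det}, then convert the resulting lower bound on $\|\Pi_0\ket{t_{x+}}\ket{0}\|$ into the claim via $\|\Pi_0\ket{t_{x+}}\ket{0}\|^2+\|\overline{\Pi}_0\ket{t_{x+}}\ket{0}\|^2=1$. Two small slips to fix: the prefactor in your first display should be $\tfrac12$, not $\tfrac{1}{\sqrt 2}$ (the $1/\sqrt2$ sits inside the norm before squaring); and the auxiliary inequality $\sqrt{1-6\hat{\varepsilon}}\geq 1-3\hat{\varepsilon}$ is reversed (in fact $\sqrt{1-6\hat{\varepsilon}}\leq 1-3\hat{\varepsilon}$), so you cannot use it to lower-bound the square root. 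Neither slip is fatal: squaring directly gives
\begin{equation*}
\left(\sqrt{1-6\hat{\varepsilon}}-\left(1+\tfrac{1}{\sqrt{2}}\right)\hat{\varepsilon}\right)^2\;\geq\;1-6\hat{\varepsilon}-2\left(1+\tfrac{1}{\sqrt{2}}\right)\hat{\varepsilon}\;=\;1-(8+\sqrt{2})\hat{\varepsilon}\;\geq\;1-10\hat{\varepsilon},
\end{equation*}
which is exactly the paper's final step (the paper uses the slightly looser bound $2\hat{\varepsilon}$ on $\|\Pi_0\ket{t_{x-}}\ket{0}\|$, giving $6+4=10$).
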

\vspace{-.3cm}
\begin{restatable}{lemma}{finalState}\label{lem:finalStateAnalysis}
If  $\|\Pi_0(\Up{\sop
 P}{x}{\alpha}{\hat{\varepsilon}})(\ket{0}\ket{\rho_x})_A\ket{0}_B\|\geq 1/2-3\hat{\varepsilon}$, then \\
$\|R(\Up{\sop
 P}{x}{\alpha}{\hat{\varepsilon}})(\ket{0}\ket{\rho_x})_A\ket{0}_B-(\ket{1}\ket{\sigma_x})_A\ket{0}_B\|\leq 6\sqrt{\hat{\varepsilon}}$
\end{restatable}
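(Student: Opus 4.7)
The plan is to exploit the fact that
$(\ket{0}\ket{\rho_x})_A\ket{0}_B=\tfrac{1}{\sqrt{2}}(\ket{t_{x+}}+\ket{t_{x-}})_A\ket{0}_B$ and
$(\ket{1}\ket{\sigma_x})_A\ket{0}_B=\tfrac{1}{\sqrt{2}}(\ket{t_{x+}}-\ket{t_{x-}})_A\ket{0}_B$, so the target error can be rewritten using the triangle inequality as
\begin{equation*}
\|R(\Up{\sop P}{x}{\alpha}{\hat{\varepsilon}})(\ket{0}\ket{\rho_x})_A\ket{0}_B-(\ket{1}\ket{\sigma_x})_A\ket{0}_B\|
\leq \tfrac{1}{\sqrt{2}}\|(R-I)\ket{t_{x+}}_A\ket{0}_B\|+\tfrac{1}{\sqrt{2}}\|(R+I)\ket{t_{x-}}_A\ket{0}_B\|,
\end{equation*}
where I abbreviate $R=R(\Up{\sop P}{x}{\alpha}{\hat{\varepsilon}})$. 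The proof then reduces to showing that each of the two terms on the right is $O(\sqrt{\hat{\varepsilon}})$.

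For the $\ket{t_{x+}}$ term, I would use the identity $R=\Pi_0-\overline{\Pi}_0$ noted just after \cref{lem:phase_refl}, which gives $R-I=-2\overline{\Pi}_0$. Then by \cref{claim:ampEstbound}, the hypothesis of the present lemma (read as a bound on probability, i.e.\ $\|\Pi_0(\ket{0}\ket{\rho_x})_A\ket{0}_B\|^2\geq 1/2-3\hat{\varepsilon}$) implies $\|\overline{\Pi}_0\ket{t_{x+}}_A\ket{0}_B\|^2\leq 10\hat{\varepsilon}$, so $\|(R-I)\ket{t_{x+}}_A\ket{0}_B\|\leq 2\sqrt{10\hat{\varepsilon}}$.

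For the $\ket{t_{x-}}$ term, I would split $\ket{t_{x-}}=P_\Theta\ket{t_{x-}}+\overline{P}_\Theta\ket{t_{x-}}$ at precision $\Theta=\hat{\varepsilon}^{3/2}/\sqrt{\alpha W}$. The $\overline{P}_\Theta$ piece is handled by the second bullet of \cref{lem:phase_refl}, contributing at most $\hat{\varepsilon}^2$, while the $P_\Theta$ piece is handled by \cref{claim:spec_gap_applied}, which gives $\|P_\Theta\ket{t_{x-}}\|\leq \hat{\varepsilon}/\sqrt{2}$; since $R+I$ has operator norm at most $2$, this contributes at most $\sqrt{2}\hat{\varepsilon}$. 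Summing gives $\|(R+I)\ket{t_{x-}}_A\ket{0}_B\|\leq \sqrt{2}\hat{\varepsilon}+\hat{\varepsilon}^2$.

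Putting the two bounds back into the triangle inequality yields $\sqrt{20\hat{\varepsilon}}+\hat{\varepsilon}+\hat{\varepsilon}^2/\sqrt{2}$, and since $\sqrt{20}<4.48$ and the $O(\hat{\varepsilon})$ terms are dominated by $\sqrt{\hat{\varepsilon}}$ for $\hat{\varepsilon}\leq 1$, the total is below $6\sqrt{\hat{\varepsilon}}$. The main obstacle is really just conceptual bookkeeping: making sure that the input hypothesis is interpreted as a probability (not an amplitude), so that the constants from \cref{claim:ampEstbound} combine correctly, and that the $+1$/$-1$ eigenspaces of $R$ are lined up with $\ket{t_{x+}}$ and $\ket{t_{x-}}$ respectively rather than the other way around. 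The rest is a routine triangle-inequality computation.
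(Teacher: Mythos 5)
Your proposal is correct and follows essentially the same route as the paper's own proof: the same $\ket{t_{x\pm}}$ decomposition, the identity $R-I=-2\overline{\Pi}_0$ combined with \cref{claim:ampEstbound} for the $\ket{t_{x+}}$ part, and the $P_\Theta/\overline{P}_\Theta$ split with \cref{claim:spec_gap_applied} and \cref{lem:phase_refl} for the $\ket{t_{x-}}$ part, yielding the identical bound $\sqrt{20\hat{\varepsilon}}+\hat{\varepsilon}+\hat{\varepsilon}^2/\sqrt{2}$. Your explicit reading of the hypothesis as a bound on $\|\Pi_0(\ket{0}\ket{\rho_x})_A\ket{0}_B\|^2$ matches how the paper actually uses it, and your final numerical comparison to $6\sqrt{\hat{\varepsilon}}$ is at the same level of rigor as the paper's (both implicitly use that $\hat{\varepsilon}=\varepsilon^2/9$ is small rather than merely $\hat{\varepsilon}\leq 1$).
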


\cref{claim:spec_gap_applied} ensures that the $\ket{t_{x-}}$ portion of the state will mostly pick up a $-1$ phase. The proof closely follows \cite[Claim 4.5]{leeQuantumQueryComplexity2011}. \cref{claim:stopping} ensures the Probing Routine halts when $\alpha$ reaches the witness size. The proof follows \cite[Claim 4.4]{leeQuantumQueryComplexity2011} to show that $\|P_0\ket{t_{x+}}\|^2$ is close to $1$ (which also ensures that the $\ket{t_{x+}}$ portion of the state will mostly pick up a $1$ phase), and then with a little bit of work and \cref{claim:spec_gap_applied}, we can prove that $\|\Pi_0(\Up{\sop
 P}{x}{\alpha}{\hat{\varepsilon}})({\ket{0}\ket{\rho_x}})_A\ket{0}_B\|^2$ is close to $1/2.$ 
With \cref{claim:ampEstbound}, we bound the amount of $\ket{t_{x+}}$ that can pick up
an incorrect phase of $-1$. The proof uses the triangle inequality and
our previous lemmas.
\cref{lem:finalStateAnalysis} uses \cref{claim:ampEstbound} and \cref{claim:spec_gap_applied} to show that after a successful Probing Stage, the State Conversion stage of \cref{alg:StateConv} produces a state
close to our target. It is our version of \cite[Proposition
4.6]{leeQuantumQueryComplexity2011}, proven using slightly weaker bounds. See \cref{app:proofs} for proofs of these Lemmas.

We now use \cref{claim:stopping}  and \cref{lem:finalStateAnalysis} to prove \cref{thm:stateConvNew}:

\begin{proof}[Proof of \cref{thm:stateConvNew}]
We analyze \cref{alg:StateConv}. With probability $p$, the Probing Stage will stop with assignments 
of $\alpha$ and $\sop P'$ such that $\|\Pi_0(\Up{\sop
 P'}{x}{\alpha}{\hat{\varepsilon}}){\ket{0}\ket{\rho_x}}_A\ket{0}_B\|^2>\frac{1}{2}-3\hat{\varepsilon}.$
This is because during the Probing Stage, we increase $\alpha$ until it is larger than $W$, so we
eventually have $\alpha\geq\min\{w_+(\sop P,x),w_+(\sop P^C,x)\}$, which by   \cref{claim:stopping}, implies $\|\Pi_0(\Up{\sop
 P'}{x}{\alpha}{\hat{\varepsilon}}){\ket{0}\ket{\rho_x}}_A\ket{0}_B\|^2>\frac{1}{2}-5/2\hat{\varepsilon}$
for $\sop P'=\sop P$ if $\alpha\geq w_+(\sop P,x)$, and $\sop P'=\sop P^C$ if
$\alpha\geq w_-(\sop P,x)$. This ends the Probing Stage since
 amplitude amplification estimates the value of $\|\Pi_0(\Up{\sop
P'}{x}{\alpha}{\hat{\varepsilon}}){\ket{0}\ket{\rho_x}}_A\ket{0}_B\|^2$ to within $\frac{1}{4}\hat{\varepsilon}$. Finally, as there are $O(\log W)$ rounds, and each fails with probability $O(p/\log W)$, the probability that all rounds of amplitude
amplification are successful is $O(p)$.

The Probing Stage may stop before $\alpha\geq\min\{w_+(\sop P,x),w_+(\sop P^C,x)\}$, but (assuming no errors in this stage), it is guaranteed to stop when $\|\Pi_0(\Up{\sop
 P'}{x}{\alpha}{\hat{\varepsilon}}){\ket{0}\ket{\rho_x}}_A\ket{0}_B\|^2>\frac{1}{2}-3\hat{\varepsilon}$, 
since the value of $\|\Pi_0(\Up{\sop
 P'}{x}{\alpha}{\hat{\varepsilon}}){\ket{0}\ket{\rho_x}}_A\ket{0}_B\|^2$ will be within $\frac{1}{4}\hat{\varepsilon}$ of its estimated value by our parameter choices
for amplitude estimation.

Applying  \cref{lem:finalStateAnalysis}, since $\|\Pi_0(U){\ket{0}\ket{\rho_x}}_A\ket{0}_B\|^2\geq
1/2-3\hat{\varepsilon}$, the State Conversion stage 
produces a state $\ket{\tilde{\sigma}}=R(\Up{\sop
 P'}{x}{\alpha}{\hat{\varepsilon}})(\ket{0}\ket{\rho_x})_A\ket{0}_B$ such that
 \begin{equation}
 \|\ket{\tilde{\sigma}}-(\ket{1}\ket{\sigma_x})_A\ket{0}_B\|\leq 3\sqrt{\hat{\varepsilon}}=\varepsilon.
 \end{equation}
Thus the algorithm is correct with the stated success probability and accuracy.

Now we analyze the query complexity. From our previous discussion, the Probing Stage stops by the
round when $\alpha$ is at least $\min\{w_+(\sop
P,x),w_+(\sop P^C,x)\}$. At the $i$th round of the Probing Stage,
phase estimation with precision $\hat{\varepsilon}^{3/2}/\sqrt{2^i W}$ and
accuracy $\hat{\varepsilon}^2$, which $O\left(\frac{\sqrt{2^i
W}}{\hat{\varepsilon}^{3/2}}\log(\frac{1}{\hat{\varepsilon}})\right)$ queries is implemented implement $O\left(\frac{\log
W}{\hat{\varepsilon}p}\right)$ times inside the amplitude estimation
subroutine. Thus the queries used by the Probing Stage of the algorithm is
\begin{equation*}
\sum_{i=0}^{\log \lceil \min\{w_+(\sop
P,x),w_+(\sop P^C,x)\}\rceil}O\left(\frac{\sqrt{2^i W}}{\hat{\varepsilon}^{3/2}}\log\left(\frac{1}{\hat{\varepsilon}}\right)\frac{\log W}{\hat{\varepsilon}p}\right)
=\tilde{O}\left(\frac{\sqrt{\min\{w_+(\sop
P,x),w_+(\sop P^C,x)\} W}}{\hat{\varepsilon}^{5/2}p}\right).
\end{equation*}

Finally, the Phase Reflection stage of the algorithm has precision
$\hat{\varepsilon}^{3/2}/\sqrt{2^i W}$ and
accuracy $\hat{\varepsilon}^2$, so uses another $O\left(\sqrt{\min\{w_+(\sop
P,x),w_+(\sop P^C,x)\}
W}\log(\frac{1}{\hat{\varepsilon}})/(\hat{\varepsilon}^{5/2})\right)$ queries. Thus the probing state dominates, and the total number of queries used by the algorithm is (using \cref{lem:complement}) $\tilde{O}\left(\sqrt{\min\{w_+(\sop
P,x),w_-(\sop P,x)\}W}/(\varepsilon^5p)\right),$
 where we've used that $\hat{\varepsilon}=\varepsilon^2/9$ (Line 2 of \cref{alg:StateConv}).
\end{proof}

\section{Acknowledgments}

We thank Stacey Jeffery for valuable discussions, especially for her preliminary notes on span program negation, and the anonymous ICALP referees for insightful suggestions.

This research was sponsored by the Army Research Office and was accomplished under Grant Number W911NF-20-1-0327. The views and conclusions contained in this document are those of the authors and should not be interpreted as representing the official policies, either expressed or implied, of the Army Research Office or the U.S. Government. The U.S. Government is authorized to reproduce and distribute reprints for Government purposes notwithstanding any copyright notation herein.

\bibliography{spanPrograms.bib}
\bibliographystyle{plainnat}


\appendix
\section{Proofs from \cref{sec:Prelims}}\label{app:sec2}

\SPduals*

\begin{proof}
We first define $H'$, starting from $H'_{j,a}$:
\begin{equation}
H'_{j,a}=\textrm{span}\{\ket{v}: \ket{v}\in H_j \textrm{ and } \ket{v}\in H_{j,a}^\perp\},
\end{equation}
where $H_{j,a}^\perp$ is the orthogonal complement of $H_{j,a}.$ We define $H'_j=\sum_{a\in [q]}H'_{j,a}$, and $H'_{true}=H_{false}$ and $H'_{false}=H_{true}$.
Then
\begin{equation}
H'=H'_1\oplus H'_2\cdots H'_n\oplus H'_{true}\oplus H'_{false}.
\end{equation}
Let $\ket{\tilde{0}}$ be a vector that is orthogonal to $H$ and $V$, and
define $V'=H\oplus\textrm{span}\{\ket{\tilde{0}}\}$ and $\tau'=\ket{\tilde{0}}.$
Finally, set
\begin{equation}
A'=\ketbra{\tilde{0}}{w_0}+\Pi_H\Lambda_A,
\end{equation}
where $\Lambda_A$ is the projection onto the kernel of $A$, $\Pi_H$ is the projection onto $H$, and 
\begin{equation}
\ket{w_0}=\argmin_{\substack{\ket{v}\in H: A\ket{v}=\tau}}\|\ket{v}\|.
\end{equation}

Let $x\in X$ be an input with $f(x)=1$, so $x$ has a positive witness
$\ket{w}$ in $P$. We will show $\omega'=\bra{\tilde{0}}+\bra{w}$ is a negative
witness for $x$ in $P^\dagger$. Note $\omega'\tau'=1$, and also,
\begin{align}
\omega' A'&=(\bra{\tilde{0}}+\bra{w})(\ketbra{\tilde{0}}{w_0}+\Pi_H\Lambda_A)\\
&=\bra{w_0}+\bra{w}\Pi_H\Lambda_A\\
&=\bra{w_0}+\bra{w}\Lambda_A\\
&=\bra{w},
\end{align}
where in the second line, we have used that $\bra{\tilde{0}}\Pi_H\Lambda_A=0$
and $\braket{w}{\tilde{0}}=0$ because $\ket{\tilde{0}}$ is orthogonal to $H$.
The final line follows from \cite[Definition
2.12]{itoApproximateSpanPrograms2019}, which showed that every positive
witness can be written as $\ket{w}=\ket{w_0}+\ket{w^\perp}$, where
$\ket{w^\perp}$ is in the kernel of $A$ and $\ket{w_0}$ is orthogonal to the
kernel of $A$. 

Then $\bra{w}\Pi_{H'(x)}=0$, because $\ket{w}\in H(x)$, and
$H'(x)$ is orthogonal to $H(x),$ so $\omega'$ is a negative witness for $x$ in
$P^\dagger$.  Also, $\|\omega' A'\|^2=\|\ket{w}\|^2$, so the witness size of
this negative witness in $P^\dagger$ is the same as the corresponding
positive witness in $P.$

If $f(x)=0$, there is a negative witness $\omega$ for $x$ in $P$. Consider
$\ket{w'}=(\omega A)^\dagger$. Then
\begin{align}
A'\ket{w'}&=(\ketbra{\tilde{0}}{w_0}+\Pi_H\Lambda_A)(\omega A)^\dagger\\
&=\ket{\tilde{0}}(\omega A\ket{w_0})^\dagger\\
&=\ket{\tilde{0}}(\omega\tau)^\dagger\\
&=\ket{\tilde{0}},
\end{align}
where in the first line, we've used that $(\omega A)^\dagger$ is orthogonal to
the kernel of $A.$ Also, $\Pi_{H(x)}(\omega A)^\dagger=0$, so $\ket{w'}\in
H'(x)$. This means $\ket{w'}$ is a positive witness for $x$ in $P^\dagger.$
Also, $\|\ket{w'}\|^2=\|\omega A\|^2$, so the witness size of this positive
witness in $P^\dagger$ is the same as the corresponding negative
witness in $P.$

We have proved that $\w{P}{x}\geq \w{P^\dagger}{x}$ for all $x$, and that if
$x$ has a positive witness in $P$, it has a negative witness in $P^\dagger$,
and vice versa. This shows that $P^\dagger$ does indeed decide $\neg f.$
\end{proof}

\complment*
\begin{proof}
We will prove the more general result showing that this holds when the matrix $B$ in \cref{def:Gamma2} is
Hermitian, and the matrices $Z$ are symmetric (as is true when $B=\rho-\sigma$
and $Z=\Delta$ for a converting vector set). For all $x\in X$ and $j\in[n]$,
define
\begin{equation}
\ket{u^C_{xj}}=\ket{v_{xj}}, 
\qquad \textrm{and} \qquad \ket{v^C_{xj}}=\ket{u_{xj}}.
\end{equation}
Note $\braket{u^C_{xj}}{v^C_{yj}}=(\braket{u_{yj}}{v_{xj}})^*$ and
$(Z_j)_{xy}=(Z_j)_{yx}$ by our symmetric assumption. Since $\sop P$ satisfies the constraints of \cref{def:Gamma2},
\begin{equation}
\sum_j(Z_j)_{xy}\braket{u^C_{xj}}{v^C_{yj}}=\sum_j(Z_j)_{yx}(\braket{u_{yj}}{v_{xj}})^*=\left(\sum_j(Z_j)_{yx}(\braket{u_{yj}}{v_{xj}})\right)^*=B_{yx}^*=B_{xy}.
\end{equation}
Thus the complementary vectors satisfy the same constraints from \cref{def:SP}, and thus produce the same optimal value in
\cref{eq:filteredNormMa}. However, now $w_+(\sop P,x)=w_-(\sop P^C,x)$, and
$w_-(\sop P,x)=w_+(\sop P^C,x)$.
\end{proof}

\CVStransform*

\begin{proof}
For all $x\in X$ and $j\in[n]$, set
\begin{equation}
\ket{u'_{xj}}=\ket{u_{xj}}\sqrt{\frac{\max_x w_-(\sop P,x)}{\max_x w_+(\sop P,x)}}, 
\qquad \ket{v'_{xj}}=\ket{v_{xj}}\sqrt{\frac{\max_x w_+(\sop P,x)}{\max_x w_-(\sop P,x)}}
\end{equation}
It is straightforward to verify that $\sop P'$ is still a converting vector set from $\rho$ to $\sigma$, and that $\max_x w_+(\sop P',x)=\max_x w_-(\sop P',x)$, and that $W(\sop P')$ is the geometric mean of $\max_x w_+(\sop P',x)$ and $\max_x w_-(\sop P',x)$, and so is at most the maximum of either term.
\end{proof}

\section{Proofs from \cref{sec:func} and \cref{sec:StateConv}}\label{app:proofs}

\phaseEstEarly*

\begin{proof}[Proof of \cref{lem:phase_est_early}]
~\\
\textit{Part 1:} Since $f(x)=1$, there is an optimal witness $\ket{w}\in H(x)$
for $x$. Then set $\ket{u}\in \tilde{H}(x)$ to be
$\ket{u}=\alpha\ket{\hat{0}}-\ket{w}.$ Clearly $\Pi_x\ket{u}=\ket{u}$, but also,
$\ket{u}$ is in the kernel of $\Aa{\alpha}$, because
$\Aa{\alpha}\ket{u}=\ket{\tau}-\ket{\tau}=0$.
Thus $\Lal{\alpha}\ket{u}=\ket{u}$, and so $\U{P}{x}{
\alpha}\ket{u}=\ket{u}$; $\ket{u}$ is a 1-valued eigenvector of 
$\U{P}{x}{\alpha}$.

We perform phase estimation on the state
$\ket{\hat{0}}$, so the probability of measuring the state $\ket{0}_B$ in the phase register is \textit{at least} (by \cref{lem:phase_det}), the
overlap of $\ket{\hat{0}}$ and (normalized) $\ket{u}$. This is
\begin{equation}
\frac{|\braket{\hat{0}}{u}|^{2}}{\| \ket{u} \|^2} = \frac{\alpha^2}{\alpha^2+\|\ket{w}\|^2}=\frac{1}{1+\frac{\w{P}{x}}{\alpha^2}}
\end{equation}
Using our assumption that $\w{P}{x}\leq \alpha^2/C$, and a Taylor
series expansion for $C\geq 2$, the probability that we measure the state $\ket{0}_B$ in the phase register is at least $1-1/C$.

~\\ \textit{Part 2:} Since $f(x)=0$, there is an optimal negative witness
$\omega$ for $x$, and we set $\ket{v}\in \tilde{H}$ to be $\ket{v}=\alpha 
(\omega\Aa{\alpha})^\dagger$. By \cref{def:negWit}, $\omega\tau=1$, so $\ket{v}=(\bra{\hat{0}}+\alpha\omega A)^\dagger.$ Again, from \cref{def:negWit},
$\omega A \PHx=0$, so we have $\Pi_x\ket{v}=\ket{\hat{0}}.$

Then when we perform phase estimation of the unitary $\U{P}{x}{\alpha}$ to some
precision $\Theta$ with error $\epsilon$ on state $\ket{\hat{0}}$, by
\cref{lem:phase_det}, we will measure $\ket{0}_B$ in the phase register with probability at most
\begin{equation}\label{eq:Pbound}
\left\|P_{\Theta}\ket{\hat{0}}\right\|^2+\epsilon=\left\|P_{\Theta}\PtHx\ket{v}\right\|^2+\epsilon,
\end{equation}
where throughout this proof, $P_{\Theta}$ is understood to be $P_{\Theta}(\U{P}{x}{\alpha}).$

Now $\ket{v}$ is orthogonal to the kernel of $\Aa{\alpha}$. (To see this, note that if $\ket{k}$ is in the kernel of $\Aa{\alpha}$, then $\braket{v}{k}=\alpha\omega \Aa{\alpha}\ket{k}=0$.)
Applying \cref{spec_gap_lemm}, and setting
$\Theta=\sqrt{\frac{\epsilon}{\alpha^2 W}}$, we have
\begin{equation}\label{eq:negCase1}
\left\|P_{\Theta}\PtHx\ket{v}\right\|^2\leq\frac{\epsilon}{\alpha^2 W}\left\|\ket{v}\right\|^2.
\end{equation}

To bound $\left\|\ket{v}\right\|^2$, we observe that 
\begin{equation}\label{eq:vbound}
\left\|\ket{v}\right\|^2=\left\|\bra{\hat{0}}+\alpha\omega A\right\|^2= 1+\alpha^2w(P,x)\leq 2\alpha^2W.
\end{equation}

Plugging \cref{eq:negCase1,eq:vbound} into \cref{eq:Pbound}, we find that the
probability of measuring $\ket{0}_B$ in the phase register  is at most
$3\epsilon$, as claimed.
\end{proof}

In
 \cref{claim:spec_gap_applied,claim:ampEstbound,claim:stopping,lem:finalStateAnalysis},
 to reduce notation, we use the following simplified conventions. Let
 $\sop{P}=\left(\{\ket{u_{xj}}\},\{\ket{v_{yj}}\}\right)$ be a converting
 vector set from $\rho$ to $\sigma$ and let $W=W(\sop P)$. Let $U=\Up{\sop
 P}{x}{\alpha}{\hat{\varepsilon}}$. We denote $P_\Theta(U)$ and $\Pi_0(U)$ as
 $P_\Theta$ and $\Pi_0$ respectively, where $\Pi_0$ refers to the projection
 onto the subspace that is mapped to the $\ket{0}$ phase register when phase
 estimation on $U$ is run with precision $\hat{\varepsilon}^{3/2}/\sqrt{\alpha W}$
 and accuracy $\hat{\varepsilon}^2$. Likewise $R$ refers to $R(U)$ (Phase
 Reflection) with precision $\hat{\varepsilon}^{3/2}/\sqrt{\alpha W}$, and
 and accuracy $\hat{\varepsilon}^2$. When we write
 $\ket{0}\ket{\rho_x}\ket{0}$, note that the first $\ket{0}$ is a single qubit
 register, while the final $\ket{0}$ is an $O\left(\log\frac{\sqrt{\alpha
 W}}{\hat{\varepsilon}^{3/2}}\log\frac{1}{\hat{\varepsilon}^2}\right)$ qubit
 register. Likewise the final $\ket{0}$ in expressions like
 $\ket{t_{x+}}\ket{0}$ is an $O\left(\log\frac{\sqrt{\alpha
 W}}{\hat{\varepsilon}^{3/2}}\log\frac{1}{\hat{\varepsilon}^2}\right)$ qubit register.

\specBound*

\begin{proof}
 Let $\ket{w} = \sqrt{\frac{\alpha}{\hat{\varepsilon}}} \ket{\psi_{x,\alpha,\hat{\varepsilon}}}$, 
 so $\Lalp \ket{w} = 0$ and $\PHxp \ket{w} = \ket{t_{x-}}$.
Applying \cref{spec_gap_lemm}, we have
 \begin{equation}\label{eq:specGap1}
\| P_\Theta \ket{t_{x-}} \|^2 = \| P_\Theta \PHxp\ket{w} \|^2 \leq \frac{\Theta^2 }{4}\| \ket{w}\|^2.
 \end{equation}
Now
\begin{equation}\label{eq:specGap2}
\left\| \ket{w}\right\|^2 = \frac{\alpha}{\hat{\varepsilon}}\left( \frac{\hat{\varepsilon}}{\alpha}\iner{t_{x-}} - \Sigma_j \iner{u_{xj}}\right) \leq 1 + \frac{\alpha W}{\hat{\varepsilon}^2}.
\end{equation}
Combining \cref{eq:specGap1,eq:specGap2}, and setting $\Theta = \hat{\varepsilon}^{3/2}/\sqrt{\alpha W}$, we have that
\begin{equation}
\| P_\Theta \ket{t_{x-}} \|^2\leq \frac{\hat{\varepsilon}^3}{4\alpha W}\left( 1 + \frac{\alpha W}{\hat{\varepsilon}} \right) \leq \frac{\hat{\varepsilon}^2}{2}.
\end{equation}
\end{proof}

\half*

\begin{proof}
We first prove that
$\|P_0\ket{t_{x+}}\|^2\geq 1-\hat{\varepsilon}$, by following \cite[Claim 4.4]{leeQuantumQueryComplexity2011}. Consider the state
\begin{equation}
\ket{\varphi}=\ket{t_{x+}}+\frac{\sqrt{\hat{\varepsilon}}}{2\sqrt{\alpha}}\frac{2(k-1)}{k}\sum_{j\in[n]}\ket{j}\ket{\nu_{x_j}}\ket{v_{xj}}.
\end{equation}
Note that for all $\ket{\psi_{y,\alpha,\hat{\varepsilon}}}$, because
$\braket{t_{y-}}{t_{x+}}=\frac{1}{2}\left(\braket{\rho_y}{\rho_x}-\braket{\sigma_y}{\sigma_x}\right)$,
and also $\sum_{j:x_j\neq
y_j}\braket{u_{yj}}{v_{xj}}=\braket{\rho_y}{\rho_x}-\braket{\sigma_y}{\sigma_x}$
(see \cref{eq:cvsSum}),
\begin{equation}
\braket{\psi_{y,\alpha,\hat{\varepsilon}}}{\varphi}=\frac{\sqrt{\hat{\varepsilon}}}{\sqrt{\alpha}}\braket{t_{y-}}{t_{x+}}-\frac{\sqrt{\hat{\varepsilon}}}{2\sqrt{\alpha}}\sum_{j:x_j\neq y_j}\braket{u_{yj}}{v_{xj}}=0.
\end{equation}

Because $\ket{\varphi}$ is orthogonal to all of the $\ket{\psi_{y,\alpha,\hat{\varepsilon}}}$, we have
$\Lalp\ket{\varphi}=\ket{\varphi}$. Also, $\PHxp\ket{\varphi}=\ket{\varphi}$ since
$\PHxp\ket{t_{x+}}=\ket{t_{x+}}$ and $\braket{\eta_{x_j}}{\nu_{x_j}}=0$ for every $j$.
Thus $P_0\ket{\varphi}=\ket{\varphi}$. 

Note
\begin{equation}
\braket{\varphi}{\varphi}=1+\frac{\hat{\varepsilon}}{4\alpha}\frac{4(k-1)^2}{k^2}w_+(\sop P,x)< 1+\hat{\varepsilon},
\end{equation}
because of our assumption that $\alpha\geq w_+(\sop P,x)$. Also, $\braket{t_{x+}}{\varphi}=1$,
so 
\begin{equation}
\|P_0\ket{t_{x+}}\|^2> \frac{1}{1+\hat{\varepsilon}}> 1-\hat{\varepsilon}.
\end{equation}
Then by \cref{lem:phase_det}, we have $\|P_0\ket{t_{x+}}\|^2\leq \|\Pi_0\ket{t_{x+}}\ket{0}\|^2$, so 
\begin{equation}\label{eq:lined}
\|\Pi_0\ket{t_{x+}}\ket{0}\|^2> 1-\hat{\varepsilon}.
\end{equation}

Now we turn to analyzing $\|\Pi_0{\ket{0}\ket{\rho_x}}\ket{0}\|$. Using the triangle inequality, we have
\begin{equation}\label{eq:linef}
\|\Pi_0{\ket{0}\ket{\rho_x}}\ket{0}\|\geq\frac{1}{\sqrt{2}}\left(
\|\Pi_0\ket{t_{x+}}\ket{0}\|-\left\|\Pi_0\ket{t_{x-}}\ket{0}\right\|\right).
\end{equation}
Inserting $I=P_\Theta+\overline{P}_\Theta$ with $\Theta = \hat{\varepsilon}^{3/2}/\sqrt{\alpha W}$ into the right-most term and using the triangle inequality again, we have
\begin{equation}\label{eq:linea}
\|\Pi_0\ket{t_{x-}}\ket{0}\|\leq \|\Pi_0(P_\Theta\ket{t_{x-}})\ket{0}\|+ \|\Pi_0(\overline{P}_\Theta\ket{t_{x-}})\ket{0}\|.
\end{equation}
Now
\begin{equation}\label{eq:lineb}
 \|\Pi_0(P_\Theta\ket{t_{x-}})\ket{0}\|\leq \|P_\Theta\ket{t_{x-}}\|\leq \frac{\hat{\varepsilon}}{\sqrt{2}},
 \end{equation} 
 by \cref{claim:spec_gap_applied}. Since we are running Phase Checking to precision $\hat{\varepsilon}^{3/2}/\sqrt{\alpha W}$ and accuracy $\hat{\varepsilon}^2$, from
\cref{lem:phase_det} we get
\begin{equation}\label{eq:linec}
\|\Pi_0(\overline{P}_\Theta\ket{t_{x-}})\ket{0}\|\leq \hat{\varepsilon}.
\end{equation}
Plugging \cref{eq:lined,eq:lineb,eq:linec,eq:linea} into \cref{eq:linef}, we obtain
\begin{equation}
\|\Pi_0{\ket{0}\ket{\rho_x}}\ket{0}\|\geq \frac{1}{\sqrt{2}}\left(\sqrt{1-\hat{\varepsilon}}-\hat{\varepsilon}\left(1+\frac{1}{\sqrt{2}}\right)\right)
\end{equation}
Using a series expansion, we find
\begin{equation}
\|\Pi_0{\ket{0}\ket{\rho_x}}\ket{0}\|^2> \frac{1}{2}-\left(\frac{3}{2}+\frac{1}{\sqrt{2}}\right)\hat{\varepsilon}>\frac{1}{2}-\frac{5}{2}\hat{\varepsilon}.
\end{equation}
\end{proof}

\wrongPhase*

\begin{proof}
Starting from our assumption, writing $\ket{0}\ket{\rho_x}$ in terms of $\ket{t_{x+}}$ and $\ket{t_{x+}}$, and using the triangle inequality, we have
\begin{equation}
\frac{1}{2}-3\hat{\varepsilon}\leq \frac{1}{2}\|\Pi_0(\ket{t_{x+}}+\ket{t_{x-}})\ket{0}\|^2
\leq 
\frac{1}{2}\left(\|\Pi_0\ket{t_{x+}}\ket{0}\|+\|\Pi_0\ket{t_{x-}}\ket{0}\|\right)^2\label{line1}.
\end{equation}
We first bound the term $\|\Pi_0\ket{t_{x-}})\ket{0}\|.$
Inserting the identity operator $I=P_{\Theta}+\overline{P}_{\Theta}$ for $\Theta=\hat{\varepsilon}^{3/2}/\sqrt{\alpha W}$, we have
\begin{align}
 \|\Pi_{0}\ket{t_{x-}}\ket{0}\|
& =\left\|\Pi_{0}\left(\left(P_{\Theta}+\overline{P}_{\Theta}\right)\ket{t_{x-}}\right)\ket{0}\right\|\\
&\leq \|P_{\Theta}\ket{t_{x-}}\|+\|\Pi_{0}\left(\overline{P}_{\Theta}\ket{t_{x-}}\right)\ket{0}\|\\
&\leq \frac{\hat{\varepsilon}}{\sqrt{2}}+\hat{\varepsilon}\leq 2\hat{\varepsilon},
\end{align}
where the second line comes from the triangle inequality and the fact that a projector acting on a vector can not increase its norm.
The first term in the final line comes from \cref{claim:spec_gap_applied}, and the second term   comes \cref{lem:phase_det}.

Plugging into \cref{line1}, we have
\begin{equation}
1-6\hat{\varepsilon}\leq\left(\|\Pi_0\ket{t_{x+}}\ket{0}\|+2\hat{\varepsilon}\right)^2.
\end{equation}
Rearranging, we find:
\begin{equation}
(\sqrt{1-6\hat{\varepsilon}}-2\hat{\varepsilon})^2\leq\|\Pi_0\ket{t_{x+}}\ket{0}\|^2.
\end{equation}
Since $\|\Pi_0\ket{t_{x+}}\ket{0}\|^2+\|\overline{\Pi}_0\ket{t_{x+}}\ket{0}\|^2=1$, we have
\begin{equation}
 \|\overline{\Pi}_0\ket{t_{x+}}\ket{0}\|^2\leq 1-(\sqrt{1-6\hat{\varepsilon}}-2\hat{\varepsilon})^2< 10\hat{\varepsilon}.
\end{equation} 
\end{proof}

\finalState*

\begin{proof}[Proof of \cref{lem:finalStateAnalysis}]
We have 
\begin{align}
\|R\ket{0}\ket{\rho_x}\ket{0}-\ket{1}\ket{\sigma_x}\ket{0}\|=
\frac{1}{\sqrt{2}}\|R(\ket{t_{x+}}+\ket{t_{x-}})\ket{0}-(\ket{t_{x+}}-\ket{t_{x-}})\ket{0}\|\\
\leq \frac{1}{\sqrt{2}}\|(R-I)\ket{t_{x+}}\ket{0}\|+\frac{1}{\sqrt{2}}\|(R+I)\ket{t_{x-}}\ket{0}\|.
\end{align}

In first term, we can replace $R$ with $\Pi_0-\overline{\Pi}_0$ (as described above \cref{lem:phase_refl}), and we can insert $I=\Pi_0+\overline{\Pi}_0$ to get
\begin{equation}
\frac{1}{\sqrt{2}}\|(R-I)\ket{t_{x+}}\ket{0}\|=\frac{1}{\sqrt{2}}\|((\Pi_0-\overline{\Pi}_0)-I)(\Pi_0+\overline{\Pi}_0)\ket{t_{x+}}\ket{0}\|
\end{equation}
Using the fact that $\Pi_0$ and $\overline{\Pi}_0$ are orthogonal, this simplifies to
\begin{equation}
\frac{2}{\sqrt{2}}\|\overline{\Pi}_0\ket{t_{x+}}\ket{0}\|\leq 2\sqrt{5\hat{\varepsilon}},
\end{equation}
by \cref{claim:ampEstbound}.

In the second term, we insert $I=P_\Theta+\overline{P}_\Theta$ and
use the triangle inequality to get 
\begin{align}
\frac{1}{\sqrt{2}}\|(R+I)\ket{t_{x-}}\ket{0}\|
\leq
\frac{1}{\sqrt{2}}\|(R+I)(P_\Theta\ket{t_{x-}})\ket{0}\|+
\frac{1}{\sqrt{2}}\|(R+I)(\overline{P}_\Theta\ket{t_{x-}})\ket{0}\|.
\end{align}
By \cref{lem:phase_refl}, $\frac{1}{\sqrt{2}}\|(R+I)(\overline{P}_\Theta\ket{t_{x-}})\ket{0}\|\leq \frac{\hat{\varepsilon}^2}{\sqrt{2}}$, and
\begin{equation}
\frac{1}{\sqrt{2}}\|(R+I)(P_\Theta\ket{t_{x-}})\ket{0}\|\leq\frac{2}{\sqrt{2}}
\|P_\Theta\ket{t_{x-}}\|\leq \hat{\varepsilon}.
\end{equation}
by a triangle inequality and \cref{claim:spec_gap_applied}.

Combining all of these bounds together, we have
\begin{equation}
\|R\ket{0}\ket{\rho_x}\ket{0}-\ket{1}\ket{\sigma_x}\ket{0}\|
\leq 2\sqrt{5\hat{\varepsilon}}+\hat{\varepsilon}+\frac{\hat{\varepsilon}^2}{\sqrt{2}}<6\sqrt{\hat{\varepsilon}}.
\end{equation}

\end{proof}

\end{document}